\pgfplotsset{width=7cm,compat=1.8}
\newcommand{\boundellipse}[3]
{(#1) ellipse (#2 and #3)
}
\theoremstyle{plain}
\newtheorem{thm}{Theorem}[section]
\newtheorem{mth}{Theorem}
\newtheorem{cor}[thm]{Corollary}
\newtheorem{lem}[thm]{Lemma}
\newtheorem*{lemm}{Lemma}
\theoremstyle{remark}
\newtheorem*{rem}{Remark}
\numberwithin{equation}{section}
\newcommand{\Ext}{\operatorname{Ext}}
\newcommand{\fif}{\fii}
\newcommand{\gz}{\ell_0}
\newcommand{\gone}{\ell_1}
\newcommand{\gtau}{\ell_\tau}
\newcommand{\bfR}{\mathbf{R}}
\newcommand{\T}{{\mathbb T}}
\newcommand{\calH}{{\mathscr H}}
\newcommand{\calK}{{\mathscr K}}
\newcommand{\calQ}{{\mathscr Q}}
\newcommand{\calW}{{\mathscr W}}
\newcommand{\bfK}{{\mathbf K}}
\newcommand{\bfk}{{\mathbf k}}
\newcommand{\Expe}{{\mathbb E}}
\newcommand{\Int}{\operatorname{Int}}
\newcommand{\eqpot}{\check{Q}}
\newcommand{\erfc}{\operatorname{erfc}}
\newcommand{\R}{{\mathbb R}}
\renewcommand{\T}{{\mathbb T}}
\renewcommand{\L}{{\mathbb L}}
\newcommand{\D}{{\mathbb D}}
\newcommand{\C}{{\mathbb C}}
\newcommand{\fii}{{\varphi}}
\newcommand{\const}{\mathrm{const.}}
\newcommand{\re}{\operatorname{Re}}
\newcommand{\im}{\operatorname{Im}}
\newcommand{\Pol}{\operatorname{Pol}}
\newcommand{\degree}{\operatorname{degree}}
\newcommand{\Pc}{\operatorname{Pc}}
\newcommand{\Prob}{{\mathbb{P}}}
\renewcommand{\d}{{\partial}}
\newcommand{\dbar}{\bar{\partial}}
\newcommand{\1}{\mathbf{1}}
\newcommand{\dist}{\operatorname{dist}}
\newcommand{\Lap}{\Delta}
\begin{document}

\title[Normal matrix ensembles at the hard edge]{A note on normal matrix ensembles at the hard edge}

\begin{abstract} 
We investigate how the theory of quasipolynomials due to Hedenmalm and Wennman works in a hard edge setting and
obtain as a consequence a scaling limit for radially symmetric potentials.
\end{abstract}

\keywords{Normal matrix ensemble, orthogonal polynomial, hard edge, scaling limit}
\subjclass[2010]{30D15, 42C05, 46E22, 60G55}

\author{Yacin Ameur}

\address{Yacin Ameur\\
Department of Mathematics\\
Faculty of Science\\
Lund University\\
P.O. BOX 118\\
221 00 Lund\\
Sweden}

\email{Yacin.Ameur@maths.lth.se}

\maketitle

\section{Introduction and main result}

In the theory of Coulomb gas ensembles, it is natural to consider two different kinds of boundary conditions: the "free boundary", where particles are admitted to be outside of the droplet,
and the "hard edge", where they are completely confined to it. In this note we consider the determinantal, two-dimensional hard edge case, i.e., we consider
eigenvalues of random normal matrices with a hard edge spectrum.

Hard edge ensembles are well-known in the Hermitian theory, where they are usually associated with the Bessel kernel \cite{Fo,Fo2,TW2}. Another possibility, a "soft/hard edge", appears when a soft edge is replaced by a hard edge cut. This situation was studied by Claeys and Kuijlaars in the paper \cite{CK}. The case at hand is also of the soft/hard type, but
to keep our terminology simple, we prefer to use the adjective "hard".

For the hard edge Ginibre ensemble,
a direct computation with the orthogonal polynomials in \cite[Section 2.3]{AKM} shows that,
under a natural scaling about a boundary point, the point process
of eigenvalues converges to the determinantal point field determined by the 1-point function
\begin{equation}\label{unih}R(z)=H(z+\bar{z})\cdot \1_\L(z),\end{equation}
where $\1_\L$ is the indicator function of the left half plane $\L=\{\re z<0\}$ and where $H$, the "hard edge plasma function", is defined by
\begin{equation}\label{hfun} H(z)=\frac 1 {\sqrt{2\pi}}\int_{-\infty}^0\frac {e^{-(z-t)^2/2}}{\fif(t)}\, dt.\end{equation}
Here $\fif$, the "free boundary function", is given by
\begin{equation}\label{ffun}\fif(t)=\frac 1 {\sqrt{2\pi}}\int_t^{+\infty}e^{-s^2/2}\, ds=\frac 1 2 \erfc\frac t{\sqrt{2}}.\end{equation}
As far as we know, the density profile \eqref{unih} appeared first in the physical paper \cite{S} from 1982, cf. \cite[Section 15.3.1]{Fo};
see Figure 1.

 In the paper \cite{HW}, it is shown that if free boundary conditions are imposed, then the point field with intensity function $R(z)=\fif(z+\bar{z})$ appears universally (i.e. for a "sufficiently large" class of ensembles) when rescaling about a regular boundary point. In the present article, we shall discuss (but not completely solve) the problem of showing
 that the 1-point function \eqref{unih} appears universally at a hard edge.

\begin{figure}[ht]
\begin{center}
\includegraphics[width=.25\textwidth]{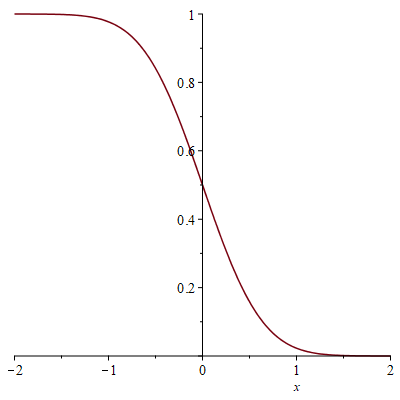}
\hspace{.15\textwidth}
\includegraphics[width=.25\textwidth]{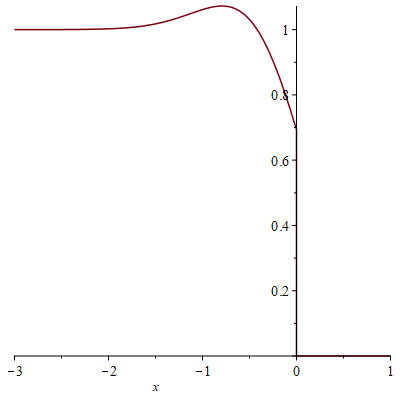}
\end{center}
\caption{Boundary profiles $R(x)$ at the free boundary and the hard edge, respectively.}
\end{figure}

Our method uses an adaptation the analysis of the orthogonal polynomials for the free boundary case in the recent paper \cite{HW}. More precisely, we shall modify the quasipolynomials from \cite{HW} so that they suit for hard edge conditions.
When this is carried out, the $H$-function appears after rescaling about a given boundary point, by taking a simple-minded sum of weighted quasipolynomials, and passing to the large volume limit.
The latter limit is quite general, but due to technical difficulties, we shall ultimately require that the external potential be radially symmetric,
in order to obtain a complete description of the limiting point field. However, we have chosen to present our approach in a general setting, hoping
that the method can contribute to a future clarification of the general case.



\subsection{Basic setup}
Fix a function ("external potential") $Q:\C\to\R\cup\{+\infty\}$ and write $\Sigma=\{Q<+\infty\}$. We assume that $\Int \Sigma$ be dense in $\Sigma$ and that $Q$ be lower semicontinuous on $\C$ and real-analytic on $\Int\Sigma$ and "large" near $\infty$:
$\liminf_{\zeta\to\infty}Q(\zeta)/\log|\zeta|^2>1.$

We next form the \textit{equilibrium measure} $\sigma$ in external potential $Q$, namely the measure $\mu$ that minimizes the weighted energy
\begin{equation}\label{q-en}I_Q[\mu]=\iint_{\C^2}\log\frac 1 {|\zeta-\eta|}\, d\mu(\zeta)d\mu(\eta)+\int_\C Q\, d\mu\end{equation}
amongst all compactly supported Borel probability measures $\mu$ on $\C$.
It is well-known \cite{ST} that $\sigma$ exists, is absolutely continuous, and takes the form
\begin{equation}\label{EQMEAS}d\sigma(\zeta)=\Lap Q(\zeta)\cdot\1_S(\zeta)\, dA(\zeta),\end{equation}
where $S$ is a compact set which we call the \textit{droplet} in potential $Q$. Here and henceforth we use the convention that $\Lap=\d\dbar$ denotes $1/4$ times the usual Laplacian, while $dA=dxdy/\pi$ is Lebesgue measure normalized so that the unit disk has measure $1$.

We will in the following assume that $S\subset\Int\Sigma$ and that $S$ be connected.
Under our assumptions, $S$ is finitely connected and the boundary $\d S$ is a union of a finite number of real-analytic arcs, possibly with finitely many singular points which can be certain types of cusps and/or double points. See e.g. \cite{AKMW,LM}.

We shall consider the \textit{outer boundary}
$\Gamma=\d \Pc S,$
where the \textit{polynomially convex hull} $\Pc S$ is the
union of $S$ and the bounded components of $\C\setminus S$.
Thus $\Gamma$ is a Jordan curve having possibly finitely many singular points. We shall assume that $\Gamma$ be \textit{everywhere regular}, i.e., that there are no singular points on $\Gamma$.

\subsection{Hard edge ensembles} Given a potential $Q$ satisfying our above assumptions, we define the corresponding \textit{localized} potential
$Q^S=Q+\infty\cdot \1_{\C\setminus S},$
i.e., $Q^S$ equals to $Q$ in $S$ and to $+\infty$ in the complement $\C\setminus S$.
Consider the Gibbs probability measure on $\C^n$ in external field $nQ^S$
\begin{equation}\label{gibb}d\Prob_n^{}\propto e^{-nH_n},\,
\qquad H_n(\zeta_1,\ldots,\zeta_n)=\sum_{j\ne k}\log\frac 1 {|\zeta_j-\zeta_k|}+n\sum_{j=1}^nQ^S(\zeta_j).\end{equation}
Here $dV_n=dA^{\otimes n}$ is Lebesgue measure on $\C^n$ divided by $\pi^n$;
"$\propto$" indicates a proportionality.

The corresponding eigenvalue ensemble $\{\zeta_j\}_1^n$ is just a configuration picked randomly with respect to \eqref{gibb}.
As in the free boundary case, the system $\{\zeta_j\}_1^n$ roughly tends to follow the equilibrium distribution, in the sense that
$\frac 1 n \Expe_n[f(\zeta_1)+\cdots +f(\zeta_n)]\to \sigma (f)$ as $n\to\infty$ for each bounded continuous function $f$.

The point process $\{\zeta_j\}_1^n$ is determined by the collection of $k$-point functions $\bfR_{n,k}$.
In the case at hand, we have the basic determinant formula
$\bfR_{n,k}^{}(\eta_1,\ldots,\eta_k)=\det(\bfK_n^{}(\eta_i,\eta_j))_{i,j=1}^k,$
where the \textit{correlation kernel} $\bfK_n^{}$ can be taken as the reproducing kernel for the subspace $\calW_n^{}$ of $L^2=L^2(\C,dA)$ consisting of all "weighted polynomials" $w=pe^{-nQ^S/2}$ where $p$ is an analytic polynomial of degree at most $n-1$. (This \textit{canonical} correlation kernel is used
without exception below.)

We write $\bfR_n^{}=\bfR_{n,1}^{}$ for the $1$-point function, which is the key player in our discussion.

\subsection{Scaling limit}
Let us now fix a (regular) point on the outer boundary $\Gamma$, without loss we place it at the origin, and so that the outwards normal to the boundary points in the positive real direction.

We define a \textit{rescaled process} $\{z_j\}_1^n$ by magnifying distances about $0$ by a factor $\sqrt{n\Lap Q(0)}$,
$$z_j=\sqrt{n\Lap Q(0)}\,\zeta_j,\qquad\qquad (j=1,\ldots,n).$$

In general, we will denote by $\zeta,z$ two complex variables related by $z=\sqrt{n\Lap Q(0)}\, \zeta$.
We regard the droplet $S$ as a subset of the $\zeta$-plane.
 Restricting to a fixed bounded subset of the $z$-plane, the image of the droplet then more and more resembles left half plane $\L$, as $n\to\infty$; see Figure 2.

\begin{figure}[ht]
\begin{tikzpicture}[spy using outlines=
	{circle, magnification=10, connect spies}]
\draw
\boundellipse{0,0}{1}{1.61803};
\draw[gray,fill] \boundellipse{0,0}{1}{1.61803};
\node at (1,1.5) {$(\zeta)$};
\node at (6.5,1.25) {$(z)$};
  \coordinate (spypoint) at (1,0);
  \coordinate (magnifyglass) at (5,0);
  \spy [black, size=3cm] on (spypoint)
   in node[fill=white] at (magnifyglass);
\end{tikzpicture}
\caption{Rescaling about a boundary point.}
\end{figure}
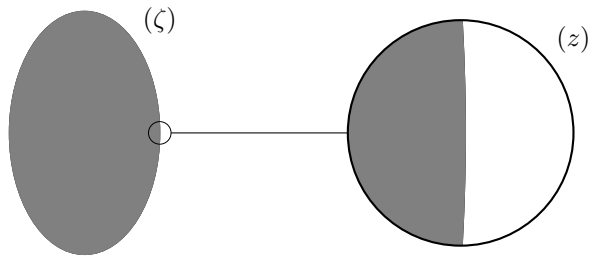

Following \cite{AKM} we denote by plain symbols $R_n^{},K_n^{}$, etc., the $1$-point function, canonical correlation kernel, etc., with respect to the process $\{z_j\}_1^n$. Here the canonical kernel $K_n$ is, by definition
\begin{equation*}K_n^{}(z,w)=\frac 1 {n\Lap Q(0)}\bfK_n^{}(\zeta,\eta),\quad z=\sqrt{n\Lap Q(0)}\,\zeta,\quad w=\sqrt{n\Lap Q(0)}\,\eta.\end{equation*}

Recall that a function of the form $c(\zeta,\eta)=g(\zeta)\bar{g}(\eta)$, where $g$ is a continuous unimodular function, is called a \textit{cocycle}. A function $h(z,w)$ is \textit{Hermitian} if $\bar{h}(w,z)=h(z,w)$, and \textit{Hermitian-analytic} if furthermore $h$ is analytic in $z$ and in $\bar{w}$. Finally, the \textit{Ginibre kernel} is
$G(z,w)=e^{-|z|^2/2-|w|^2/2+z\bar{w}}.$
This is the correlation kernel of the infinite Ginibre ensemble, which emerges by rescaling about a regular bulk point, see e.g. \cite{AKM}.

The following basic lemma gives the existence of subsequential limiting point fields.

\begin{lemm} \label{mlem} (i) There exists a sequence of cocycles $c_n$ such that each subsequence of the sequence $(c_nK_n)$ has a subsequence converging locally uniformly in $\L\times\L$
to a Hermitian limit $K$.

(ii) Each limiting kernel $K$ in (i) is of the form
$K=G\cdot\Psi\cdot \1_\L\otimes\1_\L$ where $G$ is the Ginibre kernel and $\Psi$ is a Hermitian-analytic function on $\L\times\L$.
\end{lemm}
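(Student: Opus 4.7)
The plan is to adapt the normal-families strategy of the free-boundary analysis \cite{AKM,HW} to the hard-wall setting, where $K_n\equiv 0$ off the rescaled droplet $S_n$ simplifies part of the bookkeeping. The key a priori input is a Bergman-type diagonal bound
\[
\bfK_n(\zeta,\zeta) \le C n\Lap Q(\zeta), \qquad \zeta\in S \text{ near } 0,
\]
uniform in $n$, obtained by applying the reproducing property of $\bfK_n$ to a weighted polynomial sharply concentrated at $\zeta$ together with a sub-mean-value inequality for $|p|^2 e^{-nQ}$. After the rescaling $z=\sqrt{n\Lap Q(0)}\,\zeta$ this gives $K_n(z,z)\le C$ locally uniformly on $\L$, while $K_n\equiv 0$ off $S_n$, which exhausts $\L$ as $n\to\infty$. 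Cauchy--Schwarz then bounds $K_n$ uniformly on compact subsets of $\L\times\L$.

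For the main construction, Taylor-expanding $Q$ to second order at $0$ yields
\[
nQ(\zeta) = nQ(0) + 2\re P_n(\zeta) + |z|^2 + o(1)
\]
uniformly on compact subsets in $z$, where $P_n(\zeta) = n\bigl(\d Q(0)\,\zeta + \tfrac12 \d^2 Q(0)\,\zeta^2\bigr)$ is a holomorphic polynomial encoding the $\sqrt n$-order linear and $O(1)$ quadratic phase. Setting $g_n(z) = e^{-i\im P_n(\zeta)}$ (unimodular) and $c_n(z,w) = g_n(z)\overline{g_n(w)}$, the identity $e^{-\re P_n} = e^{i\im P_n}e^{-P_n}$ absorbs the oscillatory phase into the holomorphic factor $e^{-P_n(\zeta)}$. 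Writing $\bfK_n(\zeta,\eta) = k_n(\zeta,\eta) e^{-n(Q(\zeta)+Q(\eta))/2}$ with $k_n$ analytic in $\zeta$ and anti-analytic in $\eta$, this produces the factorization
\[
c_n(z,w) K_n(z,w) = G(z,w)\,\Psi_n(z,w)\,\1_{S_n}(z)\,\1_{S_n}(w)\,(1+o(1)),
\]
where $G$ is the Ginibre kernel and $\Psi_n$ is Hermitian-analytic on $\C\times\C$, being a product of holomorphic functions of $z$ with anti-holomorphic functions of $w$. By the diagonal bound, $\Psi_n$ is uniformly bounded on compact subsets of $\L\times\L$; Montel's theorem applied separately in $z$ and $\bar w$ then extracts a locally uniform subsequential limit $\Psi$, which remains Hermitian-analytic. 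Since $\1_{S_n}\to\1_\L$ locally, one obtains $K = G\cdot\Psi\cdot\1_\L\otimes\1_\L$ in the limit, proving (i) and (ii) simultaneously.

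The main technical obstacle is making the Bergman-type diagonal bound uniform in $n$ up to the hard boundary $\Gamma$: the obstacle-function majorants used in the free-boundary analysis of \cite{HW} are designed for potentials real-analytic across the droplet's support, whereas the hard cut here calls for weighted-polynomial estimates respecting the one-sided boundary. Regularity of $\Gamma$ at $0$ together with $\Lap Q(0) > 0$ makes it plausible to obtain such bounds via concentration of weighted polynomials in half-disk-shaped neighborhoods inside $S$, but this is the step that demands most care, and it is the natural place where the quasipolynomial construction adapted from \cite{HW} enters the paper's main argument.
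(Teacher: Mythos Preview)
Your sketch is essentially the argument the paper has in mind: the paper's own ``proof'' is a one-line remark that one repeats \cite[Section 3]{AKM} with $Q^S$ in place of $Q$ and $\L$ in place of $\C$, and what you have written is a reasonable outline of exactly that argument (Taylor expansion of $Q$ at the boundary point to produce the cocycle, factorization $c_nK_n = G\cdot\Psi_n\cdot\1_{S_n}\otimes\1_{S_n}(1+o(1))$, diagonal bound plus Montel for normal families).

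Your last paragraph, however, misidentifies both the difficulty and its resolution. The convergence in the lemma is asserted only \emph{locally uniformly on $\L\times\L$}, i.e., on compact subsets staying at fixed positive distance from the imaginary axis. After rescaling, such points correspond to $\zeta$ at distance $\ge c/\sqrt{n}$ from $\Gamma$, so a disk $D(\zeta,c'/\sqrt{n})$ fits inside $S$ for large $n$ and the straightforward sub-mean-value inequality (Lemma~\ref{propp} in the paper) already gives the required diagonal bound. No ``up to the boundary'' estimate is needed for this lemma. Moreover, when the paper does prove a global bound $\bfR_n\le Cn$ on all of $S$ (Corollary~\ref{goodb}), it does so via an elementary maximum-principle/Green's-function argument (Lemma~\ref{l6}(ii)--(iii)), not via the quasipolynomial construction. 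The quasipolynomials from \cite{HW} enter only later, in the proof of Theorem~\ref{mthm2}, to identify the actual limit and pin down the value $c=0$; they play no role in establishing the existence and structural form of subsequential limits asserted here.
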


\begin{proof}[Remark on the proof]
One can simply repeat what is done in the full plane case in \cite[Section 3]{AKM}, by using the potential $Q^S$ in place of $Q$ and replacing "$\C$" by "$\L$".
\end{proof}

 By polarization and the general theory of point fields (see \cite{AKMW}), the lemma implies that a limiting 1-point function $R(z)=K(z,z)$ determines a unique limiting determinantal point field $\{z_j\}_1^\infty$  with $k$-point intensity given by the determinant
$\det(K(z_i,z_j))_{i,j=1}^k.$

We are now ready to state our main result. (Recall that $H$ is the function \eqref{hfun}.)

\begin{mth}\label{mthm2} If $Q$ is radially symmetric,  then
$R_n\to R$ with bounded almost everywhere convergence on $\C$ and locally uniform convergence on $\C\setminus(i\R)$ where
$R(z)=H(z+\bar{z})\cdot \1_\L(z)$.
\end{mth}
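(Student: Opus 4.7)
The plan is to exploit the orthogonality of monomials in the radial setting, apply a boundary-adapted Laplace analysis to the explicit sum representing $R_n$, and identify it as a Riemann sum for $H(z+\bar z)$. Writing $Q(\zeta)=q(|\zeta|)$ and $S=\{r_-\le|\zeta|\le r_+\}$, the $\{\zeta^k\}$ form an orthogonal basis of $\calW_n$, hence
\begin{equation*}
\bfR_n(\zeta)=e^{-nq(|\zeta|)}\1_S(\zeta)\sum_{k=0}^{n-1}\frac{|\zeta|^{2k}}{h_{k,n}},\qquad h_{k,n}=2\int_{r_-}^{r_+}r^{2k+1}e^{-nq(r)}\,dr.
\end{equation*}
Place the boundary point at $r_+$ on the real axis and work in $\zeta=r_++z/\gamma_n$ with $\gamma_n=\sqrt{n\Lap Q(r_+)}$. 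The normalisation $\int d\sigma=1$ combined with the Euler--Lagrange equation for $\sigma$ forces the outer-edge identity $r_+q'(r_+)=2$ (and $r_-q'(r_-)=0$ when $r_->0$); this identity drives every cancellation below.

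The key step is a uniform Laplace expansion of $h_{k,n}$ in the edge window $n-k=O(\sqrt n)$. Substituting $r=r_+-v$, Taylor-expanding the exponent $nq(r)-(2k+1)\log r$ around $r_+$ and using $r_+q'(r_+)=2$ together with $4\Lap Q(r_+)=q''(r_+)+q'(r_+)/r_+$, one finds
\begin{equation*}
n\bigl(q(r_+-v)-q(r_+)\bigr)-(2k+1)\log\tfrac{r_+-v}{r_+}=2\gamma_n^2 v^2-\tfrac{2(n-k)-1}{r_+}\,v+O(nv^3).
\end{equation*}
Completing the square, rescaling $v\mapsto u/\gamma_n$ and recognising the truncated Gaussian over $u\ge 0$, one obtains
\begin{equation*}
h_{k,n}=\tfrac{\sqrt{2\pi}}{\gamma_n}\,r_+^{2k+1}\,e^{-nq(r_+)}\,e^{t_k^2/2}\,\fif(t_k)\bigl(1+o(1)\bigr),\qquad t_k:=-\tfrac{2(n-k)-1}{2r_+\gamma_n},
\end{equation*}
the factor $\fif(t_k)$ coming precisely from the restriction $u\ge 0$. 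The grid $\{t_k\}\subset(-\infty,0]$ has uniform spacing $\Delta t=1/(r_+\gamma_n)$, with $t_{n-1}\to 0$ and $t_0\to-\infty$.

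A parallel Taylor expansion of $-nq(|\zeta|)+2k\log|\zeta|$ about $\zeta=r_+$ in the rescaled coordinate $z$, again exploiting the edge identity and the Laplacian formula, produces an exact cancellation of the would-be $O(\sqrt n)$ linear-in-$z$ term and yields
\begin{equation*}
-n\bigl(q(|\zeta|)-q(r_+)\bigr)+2k\log\tfrac{|\zeta|}{r_+}=t_k(z+\bar z)-\tfrac12(z+\bar z)^2+o(1),
\end{equation*}
locally uniformly in $z$ and uniformly for $k$ in the edge window. Combining with the $h_{k,n}$ asymptotics and dividing by $\gamma_n^2$, every summand of $R_n(z)$ takes the shape
\begin{equation*}
\frac{|\zeta|^{2k}e^{-nq(|\zeta|)}}{\gamma_n^2\,h_{k,n}}=\frac{e^{-(t_k-(z+\bar z))^2/2}}{\sqrt{2\pi}\,\fif(t_k)}\,\Delta t\,\bigl(1+o(1)\bigr),
\end{equation*}
so $R_n(z)$ is precisely a Riemann sum converging to $H(z+\bar z)$ for $z\in\L$. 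For $z\notin\overline{\L}$ one has $\zeta\notin S$ once $n$ is large, whence $R_n(z)\equiv 0=H(z+\bar z)\1_\L(z)$.

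The main obstacle is upgrading these formal estimates to the claimed modes of convergence. Concretely, one needs: (a) to show that bulk indices $n-k\gg\sqrt n$, where the Laplace critical point lies strictly inside $S$ and the truncation is irrelevant, contribute $o(1)$ uniformly in $z$ near the edge; (b) to produce a uniform $L^\infty$ bound on $R_n$ that serves as a dominating function for the stated bounded almost-everywhere convergence on $\C$; and (c) to accept the loss of uniformity on $i\R$, where the Gaussian peak collides with the truncation endpoint $t=0$ and the remainder terms cease to be negligible. I expect (a) and (b) to follow from a direct comparison with the exactly solvable hard-edge Ginibre ensemble on a disk of radius $r_+$, but the book-keeping needed to splice such comparison bounds with the above edge expansion is the most delicate part of the argument.
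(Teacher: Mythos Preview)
Your computation is correct in outline and the Riemann-sum identification of $H(z+\bar z)$ is carried out cleanly; however, the route is genuinely different from the paper's. The paper does \emph{not} exploit the explicit monomial basis. Instead it develops hard-edge quasipolynomials $F_{j,n}$ in the Hedenmalm--Wennman style for a general potential $Q$, proves an approximate-orthogonality lemma, and only invokes radial symmetry at a single technical step (to make the stopping time $t_\infty$ constant along $\Gamma_\tau$, so that a certain mean-value integral vanishes). Your approach, by contrast, uses radial symmetry from the first line, replacing the entire quasipolynomial machinery by a direct Laplace analysis of the norms $h_{k,n}$. This is shorter and more elementary for the theorem as stated, but it cannot be decoupled from the symmetry assumption, whereas the paper's framework is explicitly set up with an eye toward removing that assumption.

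Two remarks on the details. First, the identity $r_+q'(r_+)=2$ is more robustly obtained as the $C^1$ matching condition $\check Q_1'(r_+)=Q'(r_+)$ with $\check Q_1(r)=2\log r+\mathrm{const}$ for $r\ge r_+$; your derivation via total mass plus $r_-q'(r_-)=0$ is correct for disks and for annuli arising from potentials with a logarithmic singularity, but the obstacle-function argument is cleaner and works uniformly. Second, your acknowledged gaps (a) and (b) are exactly the content of the paper's preparatory estimates: the bulk terms $j\le n-\sqrt n\log n$ are discarded via the pointwise bound $|w_{j,n}(\zeta)|^2\le Cn\,e^{-n(Q-\check Q_\tau)(\zeta)}$ together with the parabolic growth $(Q-\check Q_\tau)(\zeta)\gtrsim\dist(\zeta,\Gamma_\tau)^2$, and the global bound $\bfR_n\le Cn$ follows from a maximum-principle argument on $|q|^2e^{-nQ}$ for $q\in\Pol(n)$. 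A comparison with hard-edge Ginibre is not needed; the same subharmonicity estimates you would use in the Ginibre case already apply directly to $Q$ itself.
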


\subsection{Comments} It is natural to conjecture that Theorem \ref{mthm2} should remain true also without supposing any special symmetries of the potential $Q$.
To support this, we briefly compare with the analysis of translation invariant limiting $1$-point functions $R$ from the
paper \cite{AKMW}. It
is there noted that any limiting 1-point function $R$ gives rise to a solution to Ward's equation in $\L$,
\begin{equation}\label{ward}\dbar C=R-1-\Lap \log R.\end{equation}
It is natural to hypothesize that the true intensity function $R$ should be vertically translation invariant, namely $R(z+it)=R(z)$ for all $z\in\L$ and all $t\in\R$. This is equivalent
to assuming that the holomorphic kernel $\Psi$ in Lemma \ref{mlem} be of the form $\Psi(z,w)=\Phi(z+\bar{w})$, i.e., $R(z)=\Phi(z+\bar{z})$.
It is natural, in addition, to assume that $\Phi$ is of
"error-function type", in the sense that
\begin{equation}\label{erf}\Phi(z)=\gamma*g(z):=\frac 1 {\sqrt{2\pi}}\int_{-\infty}^\infty e^{-(z-t)^2/2}g(t)\, dt\end{equation}
where $g$ is some function defined on $\R$. (The notation "$\gamma*g$" is explained in Section \ref{not} below.)

Assuming the structure $R(z)=\Phi(z+\bar{z})$ with a $\Phi$ of the form \eqref{erf}, the analysis in \cite{AKMW} shows that the only feasible solutions to \eqref{ward} are of the form
$R(z)=H(z+\bar{z}+c)\cdot \1_\L(z),$
where $c$ is some real constant. This argument makes it plausible that radial symmetry should not be needed, notwithstanding that it does not reveal the right value of $c$.
However, the problem of proving apriori translation invariance remains open at the time of writing.


Even disregarding the symmetry assumption on $Q$, our arguments break down if there is a singular point on the outer boundary.

For certain types of potentials, having finitely many logarithmic singularities, we can obtain universality also at inner boundary components.
A proof can for example be accomplished by applying the method of "root functions" from \cite{W}.

 The paper \cite{Se} gives an application of hard edge theory to the distribution of the largest modulus of an eigenvalue for radially symmetric hard edge ensembles.

\subsection{Notation} \label{not} As in \cite{AKM}, we write
$$\gamma(z)=\frac 1 {\sqrt{2\pi}}e^{-z^2/2}$$
for the "Gaussian kernel". For functions $g\in L^\infty(\R)$ we shall use the convolution by $\gamma$ in the
sense given in \eqref{erf}.
In this notation, the functions $\fif$ and $H$ become
$$\fif:=\gamma*\1_{(-\infty,0)},\qquad H:=\gamma *\left(\frac {\1_{(-\infty,0)}}\fif\right).$$

We write $\hat{\C}=\C\cup\{\infty\}$. By $D(a,\rho)$ we denote the open disc in $\C$ with center $a$, radius $\rho$;
 we abbreviate $\D(\rho)=D(0,\rho)$ and $\D_e(\rho)=\hat{\C}\setminus \overline{\D}_\rho$. When $\rho=1$ we simply
write $\D$ and $\D_e$. We write $\T=\d \D$, and $\Pol(j)$ denotes the linear space of analytic polynomials of degree at most $j$. The symbol $\|\cdot\|_p$ will denote the norm in $L^p:=L^p(\C,dA)$. More generally, if $\omega\ge 0$ is a measurable "weight-function", we denote by $L^p(\omega)$ the space defined
by the semi-norm $\|f\|_{L^p(\omega)}=\|f\omega^{1/p}\|_p$. We will frequently use the number $\delta_n:=n^{-1/2}\log n$, ($n\ge 2$).

We denote by "$\sim$" various asymptotic relations, usually as $n\to\infty$. The exact meaning will be clear from the context. We will denote by the same symbols "$C,c$" various unspecified
numerical constants (with $c>0$) which can change meaning from time to time, even within the same calculation.

\smallskip

For convenience of the reader, we now list some (nonstandard) notation used throughout.

\smallskip

\begin{tabular}{lll}
"Arclength":& $ds(\zeta)=\frac 1 {2\pi}|d\zeta|$ & (so  $\int_\T ds=1$)\cr
"Area":& $dA(\zeta)=\frac {d^2\zeta}\pi$ & (so \quad $\int_\D dA=1$)\cr
"Laplacian": &$\Delta=\d\dbar$ & (so  $\Lap_\zeta|\zeta|^2=1$)\cr
\end{tabular}

\section{Some preparations}
In this section, we briefly outline of our strategy for proving Theorem \ref{mthm2}, providing also some necessary background on Laplacian growth and weighted polynomials.

\subsection{Outline of strategy}

 Let $p_{j,n}$ be the $j$:th orthonormal polynomial with respect to the weight $e^{-nQ^S}$, and write $w_{j,n}=p_{j,n}e^{-nQ^S/2}$. We start with the basic identity
$$\bfR_n(\zeta)=\bfK_n(\zeta,\zeta)=\sum_{j=0}^{n-1}|w_{j,n}(\zeta)|^2.$$
As a preliminary step, we shall prove that if $\zeta$ is very close to the outer boundary
$\Gamma=\d \Pc S$, then all terms but the last $\sqrt{n}\log n$ ones can be neglected. More precisely, if $\zeta$ belongs to a \textit{belt}
\begin{equation}\label{belt}N_\Gamma=\{\zeta\in S;\, \dist(\zeta,\d S)\le C/\sqrt{n}\},\end{equation}
then
\begin{equation}\label{redu}\bfR_n(\zeta)\sim \sum_{j=n-\sqrt{n}\log n}^{n-1}|w_{j,n}(\zeta)|^2.\end{equation}

The modest goal of this section is to prove the reduction \eqref{redu}.
Later on, we will analyze the sum in \eqref{redu} in a microscopic neighbourhood of the given point $0\in \Gamma$.

\subsection{Preliminaries on Laplacian growth}

In this subsection we recast some results on Laplacian growth; this is convenient, if not otherwise, to introduce some necessary notation. References and further reading can be found, for instance in \cite{GM,HM,LM,VE} and in \cite[Section 2]{HW}.

For a parameter $\tau$ with $0<\tau\le 1$, we let
$\check{Q}_\tau$ be the \textit{obstacle function} subject to the growth condition
$$\check{Q}_\tau(\zeta)=2\tau\log|\zeta|+O(1),\quad \text{as}\quad \zeta\to\infty.$$
The precise definition runs as follows: for each $\eta\in\C$, $\check{Q}_\tau(\eta)$ is the supremum of
$s(\eta)$ where $s$ is a subharmonic function which is everywhere $\le  Q$ and satisfies $s(\zeta)\le 2\tau\log|\zeta|+O(1)$ as $\zeta\to\infty$.

Write $S_\tau$ for the droplet in external potential $Q/\tau$ and note that
$\sigma(S_\tau)=\tau$
where $\sigma$ is the equilibrium measure \eqref{EQMEAS}.
Under our conditions,
$S_\tau$ equals to closure of the interior of the coincidence set $\{Q=\check{Q}_\tau\}$ and the measure $\sigma_\tau:=\1_{S_\tau}\cdot \sigma$ minimizes the weighted energy \eqref{q-en} amongst positive measures of total mass $\tau$.

Clearly the droplets $S_\tau$ increase with $\tau$. The evolution of the $S_\tau$'s is known as Laplacian growth.
We will write $\Gamma_\tau$ for the outer boundary,
$$\Gamma_\tau=\d\Pc S_\tau.$$
Hence $\Gamma_\tau=\d U_\tau$ where
$$U_\tau=\hat{\C}\setminus \Pc S_\tau.$$
Finally, we denote by
$$\phi_\tau:U_\tau\to\D_e$$
the unique conformal (surjective) map, normalized so that $\phi_\tau(\infty)=\infty$ and $\phi_\tau'(\infty)>0$.

It is well-known that $\check{Q}_\tau$ is $C^{1,1}$-smooth on $\C$ and harmonic on $U_\tau$. Moreover, since $\Gamma=\Gamma_1$ is everywhere regular, it follows from standard facts about Laplacian growth that $\Gamma_\tau$ is everywhere regular for all $\tau$ in some interval $\tau_0\le \tau\le 1$, where $\tau_0<1$. Below we fix, once and for all, such a $\tau_0$.

The following is a variant of "Richardson's lemma" \cite{HM,VE}.

\begin{lem} If $0<\tau<\tau'\le 1$ and if $h$ is harmonic in $U_\tau$ and smooth up to the boundary, then
$$\int_{S_{\tau'}\setminus S_\tau}h\Lap Q\, dA=(\tau'-\tau)h(\infty).$$
\end{lem}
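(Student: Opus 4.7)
The plan is to introduce the auxiliary function $u := \check Q_{\tau'} - \check Q_\tau$ and to reduce the identity to Green's second formula for $h$ and $u$ on the exterior domain $U_\tau$. From the properties of obstacle functions recalled in the excerpt, $u$ is $C^{1,1}$ on $\C$, harmonic outside of $S_{\tau'}$, vanishes identically on $S_\tau$ (since both obstacle functions agree with $Q$ there), and satisfies the growth $u(\zeta) = 2(\tau'-\tau)\log|\zeta| + O(1)$ as $\zeta\to\infty$. Its distributional Laplacian is
$$\Lap u \;=\; \Lap Q\cdot(\1_{S_{\tau'}}-\1_{S_\tau}) \;=\; \Lap Q\cdot\1_{S_{\tau'}\setminus S_\tau}.$$
Moreover, because $u\equiv 0$ on the full-dimensional set $S_\tau$ and $u$ is $C^1$, both $u$ and $\nabla u$ vanish on $\Gamma_\tau\subset \d S_\tau$.

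Next I would apply Green's second identity to the pair $(h,u)$ on the bounded annular domain $\Omega_R := U_\tau \cap D(0,R)$, with $R$ so large that $D(0,R)\supset\Pc S_{\tau'}$. Using $\Lap h=0$ in $\Omega_R$ together with the vanishing of the boundary contribution along $\Gamma_\tau$, one obtains
$$\int_{\Omega_R} h\,\Lap u\,dA \;=\; \tfrac12\int_{\d D(0,R)}\bigl(h\,\d_n u - u\,\d_n h\bigr)\,ds,$$
where the factor $\tfrac12$ reflects the normalizations $\Lap=\d\dbar$ and $dA=dxdy/\pi$, $ds=|d\zeta|/(2\pi)$. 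Then I let $R\to\infty$ using the asymptotics $h(\zeta)=h(\infty)+O(|\zeta|^{-1})$, $\d_n h = O(|\zeta|^{-2})$, $\d_n u = 2(\tau'-\tau)/R+O(R^{-2})$ and $u=O(\log R)$: the $u\,\d_n h$-term is $O(R^{-1}\log R)\to 0$, while the $h\,\d_n u$-term tends by the mean-value property to $2(\tau'-\tau)h(\infty)$. Together with the $\tfrac12$ factor, this gives the desired $(\tau'-\tau)h(\infty)$ on the right, while the left side converges to $\int_{S_{\tau'}\setminus S_\tau} h\,\Lap Q\,dA$.

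The main technical point to verify is that $S_{\tau'}\setminus S_\tau \subset U_\tau$, so that $h$ is actually defined on the support of $\Lap u$ (and the left-hand side in Green's identity captures the whole integral against $\Lap Q$). In the simply connected case this is automatic; in the general finitely connected setting allowed by the standing hypotheses on $S$, it follows from the standard monotonicity/outward-growth properties of Laplacian growth combined with the assumed regularity of the outer boundaries $\Gamma_\tau$ on $[\tau_0,1]$. This point, together with the vanishing of $\nabla u$ on every component of $\Gamma_\tau$, is the only place where the full structural assumptions on the droplet are needed; the rest of the argument is a direct application of Green's identity to the exterior harmonic function $h$ and the obstacle-difference $u$.
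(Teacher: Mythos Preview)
Your argument is correct and uses the same ingredients as the paper's proof (the obstacle functions $\check Q_\tau$, the identity $\Lap\check Q_\tau=\1_{S_\tau}\Lap Q$, and an integration by parts), but you organise them differently. The paper first subtracts the constant $h(\infty)$ to reduce to $h(\infty)=0$, extends $h$ smoothly to all of $\C$, and then moves $\Lap$ from $\check Q_\tau$ onto $h$ on the whole plane; subtracting the resulting identities for $\tau$ and $\tau'$ gives zero because $\Lap h$ is supported in $\Pc S_\tau$ while $\check Q_{\tau'}-\check Q_\tau$ vanishes there. You instead keep $h$ on $U_\tau$, pair it with the difference $u=\check Q_{\tau'}-\check Q_\tau$ via Green's identity on $U_\tau\cap D(0,R)$, and read off $(\tau'-\tau)h(\infty)$ directly from the large-circle boundary term. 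Both routes are really the same integration by parts, and both implicitly rely on the same structural fact you flag at the end (that the growth $S_{\tau'}\setminus S_\tau$ sits in $U_\tau$, equivalently that $u\equiv 0$ on all of $\Pc S_\tau$); the paper's formulation needs this just as much, since otherwise the integral in the statement would depend on the extension of $h$. Your version has the small advantage of avoiding any extension of $h$ and of producing the nonzero right-hand side in one step rather than via a preliminary reduction to $h(\infty)=0$.
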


\begin{proof} Since $\sigma(S_\tau)=\tau$,
the asserted identity is true when $h$ is a constant. Hence we can assume that $h(\infty)=0$. Let us fix such a $h$ and extend it to $\C$ in a smooth way.
It follows from the properties of the obstacle function $\check{Q}_\tau$ that
$$\int_{S_\tau}h\Lap Q\, dA=\int_\C h\Lap \check{Q}_\tau\, dA=\int_\C \Lap h\cdot \1_{S_\tau}Q\, dA.$$
Subtracting the corresponding identity with $\tau$ replaced by $\tau'$ we find
$\int_{S_{\tau'}\setminus S_\tau}h\Lap Q\, dA=0,$
finishing the proof of the lemma.
 \end{proof}

The lemma says that if $d\sigma_\tau=\Lap Q\cdot\1_{S_\tau}\, dA$ is equilibrium measure of mass $\tau$, then near the outer boundary component $\Gamma_\tau$, we have, in a suitable "weak" sense that
$\frac {d\sigma_\tau} {d\tau}=\omega_\tau$
where $\omega_\tau$ is the harmonic measure of $U_\tau$ evaluated at $\infty$. This means: if $h$ is a continuous function on $\d S_\tau$, then $\omega_\tau(h)=\tilde{h}(\infty)$, where $\tilde{h}$ is the harmonic extension to $U_\tau$ of $h$.

Let us define the Green's function of $U_\tau$ with pole at infinity by
$G(\zeta,\infty)=\log|\phi_\tau(\zeta)|^2$.
Then by Green's identity,
$$\tilde{h}(\infty)=\int_{U_\tau}\tilde{h}(\zeta)\,\Lap G (\zeta,\infty)\, dA(\zeta)=-\frac 1 2 \int_{\d S_\tau} h\frac {\d G}{\d n}\, ds_\tau,$$
where $ds_\tau$ stands for the arclength measure on $\Gamma_\tau$ divided by $2\pi$.

Now for values of $\tau$ such that $\Gamma_\tau$ is everywhere regular
$-\frac \d {\d n} G(\zeta,\infty)=2|\phi'_\tau(\zeta)|$ when $\zeta\in \Gamma_\tau$
so we conclude that
$\frac d {d\tau}(\1_{S_\tau}\Lap Q\, dA)=|\phi'_\tau|ds_\tau$,
meaning that the outer boundary $\Gamma_\tau$ moves in the direction normal to $\Gamma_\tau$, at local speed $|\phi'_\tau|/2\Lap Q$. (The factor $2$ comes about because of
the different normalizations of $dA$ and $ds_\tau$.)

The above dynamic of $\Gamma_\tau$ is of course deduced in a "weak" sense, but using the regularity of the curves involved, one can turn it into a pointwise
estimate. More precisely, one has the following result, which is essentially \cite[Lemma 2.3.1]{HW}.

\begin{lem} \label{movin} Let $0$ be a boundary point of $\Gamma=\Gamma_1$ such that the outwards normal points in the positive real direction
and fix $\tau$ with $\tau_0\le\tau\le 1$. Let $x_\tau\in\Gamma_\tau\cap\R$ be the point closest to $0$. Then
$$x_{\tau}=-\frac {|\phi_1'(0)|}{2\Lap Q(0)}(1-\tau)+O((1-\tau)^2).$$
A similar estimate is true at each $\zeta\in\Gamma$ with a $O$-constant uniform in $\zeta$.
In particular there are constants $c_1,c_2>0$ such that for all $\zeta\in \Gamma$ and all $\tau_0\le\tau\le 1$,
$$c_1(1-\tau)\le \dist(\Gamma_\tau,\zeta)\le c_2(1-\tau).$$
\end{lem}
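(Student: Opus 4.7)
My plan is to upgrade the weak normal-velocity identity $d\sigma_\tau/d\tau = |\phi_\tau'|\,ds_\tau$ (just established via Richardson's lemma) to a pointwise statement about the motion of $\Gamma_\tau$, and then integrate the resulting first-order ODE for the boundary point along the inward normal from $0$. The first step is the technical hurdle: since $\Gamma=\Gamma_1$ is everywhere regular and $Q$ is real-analytic near $\Gamma$, standard smooth-dependence results for obstacle problems with real-analytic data (as developed in \cite[Section 2]{HW} or \cite{LM}) produce a smooth family of diffeomorphisms $\Psi_\tau:\Gamma\to\Gamma_\tau$ with $\Psi_1=\operatorname{id}$, together with conformal maps $\phi_\tau$ that extend real-analytically across $\Gamma_\tau$ and depend smoothly on $(\tau,\zeta)$. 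Differentiating the free-boundary condition $\eqpot_\tau=Q$ on $S_\tau$ along this flow upgrades the Richardson identity to the pointwise statement that the outward normal velocity of $\Psi_\tau$ equals $v(\zeta,\tau):=|\phi_\tau'(\zeta)|/(2\Lap Q(\zeta))$.

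With pointwise normal velocity in hand, I would track the integral curve $\zeta(\tau)\in\Gamma_\tau$ defined by $\dot\zeta(\tau)=v(\zeta(\tau),\tau)\,n(\zeta(\tau),\tau)$ and $\zeta(1)=0$, where $n$ is the outward unit normal to $S_\tau$. At $\tau=1$ we have $n(0,1)=+1$ and $v(0,1)=|\phi_1'(0)|/(2\Lap Q(0))$, so Taylor's theorem (using the joint smoothness of $v$ and $n$ on a neighbourhood of $(0,1)$) yields
\[
\zeta(\tau)=-\frac{|\phi_1'(0)|}{2\Lap Q(0)}(1-\tau)+O((1-\tau)^2).
\]
In particular the imaginary part of $\zeta(\tau)$ is $O((1-\tau)^2)$, and since the tangent to $\Gamma_\tau$ near $\zeta(\tau)$ is close to the imaginary direction (the tangent to $\Gamma_1$ at $0$), moving along $\Gamma_\tau$ through a height of that order to reach the real axis changes the real part only by $O((1-\tau)^2)$. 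Hence $x_\tau=\zeta(\tau)+O((1-\tau)^2)$, proving the main asymptotic.

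The asserted uniformity in the base point comes from running the same argument at every $\zeta\in\Gamma$: the coefficient $|\phi_1'(\zeta)|/(2\Lap Q(\zeta))$ and the remainder in Taylor's theorem are continuous in $\zeta$, hence bounded on the compact curve $\Gamma$; moreover $|\phi_1'|$ is bounded away from zero there by everywhere-regularity. The two-sided bound $c_1(1-\tau)\le\dist(\Gamma_\tau,\zeta)\le c_2(1-\tau)$ is then immediate from the leading-order expansion. I expect the main obstacle throughout to be this smooth dependence of $\Gamma_\tau$ and $\phi_\tau$ on $\tau$ up to and including $\tau=1$; this is essentially the content of \cite[Lemma 2.3.1]{HW} on which the present lemma is modeled, and I would reduce to that result rather than reprove it from scratch.
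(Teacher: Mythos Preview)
Your proposal is correct and matches the paper's approach: the paper does not give a self-contained proof of this lemma but instead derives the weak normal-velocity identity $d\sigma_\tau/d\tau=|\phi_\tau'|\,ds_\tau$ via Richardson's lemma, remarks that regularity of the curves lets one upgrade this to a pointwise estimate, and then cites \cite[Lemma 2.3.1]{HW} as the source of the result. Your sketch of how to turn the weak identity into the pointwise ODE and integrate it is exactly the content of that reference, and your decision to reduce to it rather than reprove it is what the paper does as well.
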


\subsection{Weighted polynomials}

Aiming to adapt the rescaling procedure in \cite{AKM} to hard edge conditions, the first thing we note is that the standard apriori estimates in \cite[Section 3]{AKM} are not directly applicable.
We here give a few supplementary estimates which will do for our present applications.

\begin{lem} \label{propp} Suppose that $Q^S$ is finite and $C^2$-smooth in a disk $D=D(\zeta,c/\sqrt{n})$. Let $u$ be a holomorphic function on $D$, and put $w=ue^{-nQ^S/2}$.
Then there is a constant $C$ depending only on $\sup_D\Lap Q$ such that $$|w(\zeta)|^2\le \frac {Cn} {c^2}\int_D|w|^2\, dA.$$
\end{lem}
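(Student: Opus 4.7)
The plan is to reduce the statement to the classical sub-mean value inequality for holomorphic functions by absorbing the non-harmonic part of the weight into a nearby holomorphic surrogate. The point is that $|w|^2$ itself is not subharmonic, but on the small disk $D$ it differs from $|v|^2$ by a bounded factor, where $v$ is an auxiliary holomorphic function built from $u$ and the holomorphic part of $Q$.

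Concretely, Taylor expand $Q$ at the center $\zeta$ through second order, separating the holomorphic and anti-holomorphic pieces, to obtain
$$Q(\eta)=\re q(\eta)+\Lap Q(\zeta)\,|\eta-\zeta|^2+R(\eta),$$
where $q(\eta)=Q(\zeta)+2\,\d Q(\zeta)(\eta-\zeta)+\d^2 Q(\zeta)(\eta-\zeta)^2$ is a holomorphic polynomial in $\eta-\zeta$ and $R(\eta)=o(|\eta-\zeta|^2)$ uniformly on $D$. Now set $v(\eta):=u(\eta)\,e^{-nq(\eta)/2}$. Both $u$ and $q$ are holomorphic on $D$, so $v$ is holomorphic and $|v|^2$ is subharmonic. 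The classical sub-mean value inequality on the disk of radius $c/\sqrt n$, using the normalization $dA=d^2\eta/\pi$, then gives
$$|v(\zeta)|^2\le\frac{n}{c^2}\int_D|v|^2\,dA.$$

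It remains to compare $|v|$ with $|w|$. At the center, $q(\zeta)=Q(\zeta)\in\R$, so $|v(\zeta)|=|w(\zeta)|$. For any $\eta\in D$,
$$\frac{|v(\eta)|^2}{|w(\eta)|^2}=\exp\bigl(n\Lap Q(\zeta)\,|\eta-\zeta|^2+nR(\eta)\bigr),$$
and on $D$ the exponent is bounded by $c^2\sup_D\Lap Q$ plus an error of order $c^2$ controlled by the $C^2$-smoothness of $Q$. Hence $|v|^2\le C\,|w|^2$ on $D$ with $C$ depending only on local second-order bounds for $Q$, and combining the two ingredients yields
$$|w(\zeta)|^2=|v(\zeta)|^2\le\frac{n}{c^2}\int_D|v|^2\,dA\le\frac{Cn}{c^2}\int_D|w|^2\,dA,$$
as claimed.

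The argument is essentially routine; no serious obstacle is anticipated. The only mild caveat is that $C$ a priori depends on slightly more than just $\sup_D\Lap Q$ (it also reflects the off-diagonal second derivatives of $Q$), but in the real-analytic setting of the paper this is harmless, and one can interpret the dependence on $\sup_D\Lap Q$ in that generous sense.
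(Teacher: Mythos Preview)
Your argument is correct, and it is close in spirit to the paper's, though the execution differs slightly. The paper does not Taylor-expand $Q$; instead it observes directly that $\eta\mapsto |w(\eta)|^2 e^{Kn|\eta-\zeta|^2}$ is logarithmically subharmonic on $D$ as soon as $K\ge\sup_D\Lap Q$, since its logarithm $2\log|u|-nQ+Kn|\eta-\zeta|^2$ has Laplacian $\ge n(K-\Lap Q)\ge 0$. One then applies the sub-mean value inequality to this function and bounds $e^{Kn|\eta-\zeta|^2}\le e^{Kc^2}$ on $D$. This route is a touch cleaner: it sidesteps the Taylor remainder $R$ and the off-diagonal second derivatives altogether, so the constant visibly depends only on $\sup_D\Lap Q$ (and $c$), exactly addressing the caveat you raise at the end. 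Your approach, by contrast, builds a genuinely holomorphic surrogate $v$, which is a perfectly good alternative and perhaps more conceptual, at the price of having to control $nR(\eta)$ on $D$; as you note, this requires a hair more regularity than the bare statement admits.
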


\begin{proof} This follows from a standard argument, using that $\zeta\mapsto |w(\zeta)|^2e^{Kn|\zeta|^2}$ is
logarithmically subharmonic in $D$ for large enough $K$. See for instance \cite[Section 3]{AKM}.
\end{proof}

In the sequel we recall that the symbol "$\tau_0$" denotes a fixed number $<1$ such that the curves $\Gamma_\tau$ are
analytic Jordan curves when $\tau_0\le\tau\le 1$.

\begin{lem} \label{l6} Consider a weighted polynomial $w=qe^{-nQ^S/2}$.

\begin{enumerate}[label=(\roman*)]
\item \label{1st} Let $a>0$ be a constant and suppose that $q\in \Pol(j)$ where $j$ is in the range
$$n(1-\tau_0)\le j\le n-a\sqrt{n}.$$
Then $$|w(\zeta)|\le C\sqrt{n}\|w\|_{2}e^{-n(Q^S-\eqpot_\tau)(\zeta)/2}$$ where $C$ depends only on $a$.
\item \label{2nd} 
If $q\in\Pol(n)$ then $\|w\|_\infty\le \|w\|_{L^\infty(S_\tau)}\cdot e^{cn(1-\tau)^2}$.
\item \label{3rd} If $q\in\Pol(n)$ then
$\|w\|_\infty\le C\sqrt{n}\|w\|_2$.
\end{enumerate}
\end{lem}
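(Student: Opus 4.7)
The plan is to prove the three parts in sequence, with (ii) and (iii) relying on part (i)--style ideas combined with the geometric control on $\Gamma_\tau$ from Lemma \ref{movin}. The workhorse in all three parts is Lemma \ref{propp}, which converts subharmonicity on disks of radius $c/\sqrt{n}$ into a pointwise bound by the $L^2$ norm.

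For (i), I would introduce the auxiliary weighted polynomial $\check{w}(\zeta) = q(\zeta)\,e^{-n\check{Q}_\tau(\zeta)/2}$, with $\tau \in [\tau_0,1]$ chosen compatibly so that $n\tau \ge j + a\sqrt{n}$. Since $\check{Q}_\tau$ is harmonic on $\C \setminus S_\tau$, the function $\log|\check{w}|$ is subharmonic there. The growth condition $\check{Q}_\tau(\zeta) = 2\tau\log|\zeta| + O(1)$ at infinity combined with $\deg q \le j$ gives $|\check{w}(\zeta)|^2 = O(|\zeta|^{-2a\sqrt{n}}) \to 0$, so the maximum principle forces $\sup_{\C \setminus S_\tau}|\check{w}|$ to be attained on $\partial S_\tau$. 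On $\partial S_\tau$ one has $\check{Q}_\tau = Q = Q^S$, hence $|\check{w}| = |w|$ there, and Lemma \ref{propp} applies on disks $D(\zeta, c/\sqrt{n}) \subset \Int S$ to give $|w|^2 \le Cn\|w\|_2^2$ on $\partial S_\tau$. Rewriting $|w| = |\check{w}|\,e^{-n(Q^S - \check{Q}_\tau)/2}$ (with both sides zero off $S$) then yields the asserted global estimate.

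For (ii), with $q \in \Pol(n)$ the auxiliary $\check{w}$ no longer decays; instead $\log|\check{w}(\zeta)| \le n(1-\tau)\log|\zeta| + O(1)$ at infinity. A Phragm\'en--Lindel\"of argument against the Green's function $g_\tau(\cdot,\infty) = \log|\phi_\tau|$ of $U_\tau$ then yields
\begin{equation*}
|\check{w}(\zeta)| \le \|w\|_{L^\infty(S_\tau)}\,|\phi_\tau(\zeta)|^{n(1-\tau)}, \qquad \zeta \in U_\tau.
\end{equation*}
For $\zeta \in S \setminus S_\tau$, Lemma \ref{movin} gives $\dist(\zeta,\partial S_\tau) = O(1-\tau)$, so by conformality $|\phi_\tau(\zeta)| = 1 + O(1-\tau)$ and hence $|\phi_\tau(\zeta)|^{n(1-\tau)} \le e^{Cn(1-\tau)^2}$. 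Combined with $|w| \le |\check{w}|$ on $S$ (from $Q \ge \check{Q}_\tau$), this produces the desired bound on $S \setminus S_\tau$; the bound is trivial on $S_\tau$, and any bounded components of $\C \setminus S_\tau$ lying inside $S$ are handled by the ordinary maximum principle.

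Part (iii) then follows directly from (ii): choosing $\tau = 1 - 1/\sqrt{n}$ (which lies in $[\tau_0,1]$ for $n$ large) gives $(1-\tau)^2 = 1/n$, so the factor $e^{cn(1-\tau)^2}$ is a bounded constant and reduces the full $L^\infty$ bound to an $L^\infty(S_\tau)$ bound. By Lemma \ref{movin}, every $\zeta \in S_\tau$ lies at distance at least $c'/\sqrt{n}$ from $\partial S$, so Lemma \ref{propp} applies on a disk $D(\zeta,c/\sqrt{n}) \subset \Int S$ and gives $|w(\zeta)|^2 \le Cn\|w\|_2^2$. I expect the Phragm\'en--Lindel\"of step in (ii) to be the main conceptual obstacle, together with verifying quantitatively that $|\phi_\tau| - 1$ and $Q - \check{Q}_\tau$ are of the expected orders in the thin belt between $\partial S_\tau$ and $\partial S$; both of these rely on the regularity of $\Gamma_\tau$ for $\tau \in [\tau_0,1]$.
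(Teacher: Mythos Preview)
Your arguments for (ii) and (iii) match the paper's essentially verbatim: the paper also normalizes, subtracts $(1-\tau)G_\tau=(1-\tau)\log|\phi_\tau|^2$ to kill the growth at infinity, applies the maximum principle on $U_\tau$, and then uses $G_\tau(\zeta)\le c(1-\tau)$ on $S\setminus S_\tau$ from Lemma~\ref{movin}; part (iii) is then the specialization $\tau=1-1/\sqrt{n}$ exactly as you propose.

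For (i) the paper takes a slightly different and cleaner route. After the pointwise bound $|w|\le C\sqrt{n}\|w\|_2$ on $S_\tau$ (your step via Lemma~\ref{propp}), it sets
\[
u_n(\zeta)=\tfrac1n\log\bigl(|q(\zeta)|^2/(Cn\|w\|_2^2)\bigr),
\]
notes that $u_n$ is subharmonic on all of $\C$, $\le Q$ on $S_\tau$, and grows at most like $\tau\log|\zeta|^2+O(1)$, and then invokes the \emph{defining} extremal property of the obstacle function to conclude $u_n\le\check Q_\tau$ globally with $\tau=j/n$. This is exactly the inequality claimed.

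Your maximum-principle argument on $\C\setminus S_\tau$ instead needs $\check w\to0$ at infinity, which forces you to take $n\tau\ge j+a\sqrt{n}$, i.e.\ $\tau>j/n$. But the statement (and its use in \S\ref{disc}) is for $\tau=j/n$; with that value $\check w$ is only bounded at infinity, and the naive maximum principle on an exterior domain can fail (think of $\log|\zeta|$ on $\D_e$). Your fix of shifting $\tau$ therefore proves a weaker inequality, with $\check Q_\tau$ replaced by the larger $\check Q_{\tau'}$. You can repair this either by running the Phragm\'en--Lindel\"of comparison you already use in (ii) (subtract $(1-\tau)\log|\phi_\tau|^2$ with $\tau=j/n$; now there is no growth to kill since $\deg q\le j=n\tau$), or---more simply---by appealing to the obstacle characterization as the paper does, which absorbs both the interior bound and the growth condition in one stroke.
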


\begin{proof} \ref{1st}: By assumption, $1-\tau_0\le \tau\le 1-a/\sqrt{n}$, where we write
$\tau=\tau(j)=j/n.$
It follows that $\dist(\Gamma_\tau,\Gamma)\ge c/\sqrt{n}$ for some
$c=c(a)>0$ by Lemma \ref{movin}.
By Lemma \ref{propp}
there is a constant $C=C(c)$ such that $|w(\zeta)|\le C\sqrt{n}\|w\|_{2}$ when $\zeta\in S_\tau$.

Now consider the subharmonic function
\begin{equation*}\label{smock}u_n(\zeta)=\frac 1 n\log\left(\frac {|q(\zeta)|^2}{Cn\|w\|_2^2}\right).\end{equation*}
Note that $u_n(\zeta)\le \tau\log|\zeta|^2+O(1)$ as $\zeta\to\infty$ and $u_n\le Q$ on $S_\tau$, and $u_n$ is subharmonic on $\C$. Hence $u_n\le\eqpot_\tau$ on $\C$, which is
equivalent to (i).

\ref{2nd}: Let $U_\tau$ be the unbounded component of $\hat{\C}\setminus S_\tau$.
The function $$v_n(\zeta):=\frac 1 n\log|q(\zeta)|^2$$ is by assumption $\le Q$ on $S_\tau$ and grows no faster than $\log|\zeta|^2+O(1)$ as $\zeta\to\infty$.
It follows that the function $$s_n(\zeta)=v_n(\zeta)-\eqpot_\tau(\zeta)-(1-\tau)G_\tau(\zeta)$$ is subharmonic and bounded in $U_\tau$, where $G_\tau(\zeta)$ is Green's function
of $U_\tau$ with pole at infinity. Moreover, $s_n\le 0$ on $\Gamma_\tau$ and $\limsup_{\zeta\to\infty} s_n(\zeta)\le 0$.
By a suitable version of the maximum principle, $s_n\le 0$ on $U_\tau$.
Hence if $\zeta\in S\setminus S_\tau$,
$$v_n(\zeta)\le \eqpot_\tau(\zeta)+(1-\tau)G_\tau(\zeta)\le Q(\zeta)+c(1-\tau)^2,$$
where the estimate $G_\tau(\zeta)\le c(1-\tau)$ was obtained using that the distance from $\zeta$ to $\Gamma_\tau$ is $\le c_2(1-\tau)$ by Lemma \ref{movin} and the $G_\tau$ continues harmonically across $\Gamma_\tau$.

We have shown that $|w|^2\le|q|^2e^{-nQ^S} e^{nc(1-\tau)^2}$ on $\C$, as desired.

\ref{3rd} Put $\tau=1-1/\sqrt{n}$ and choose $c>0$ so that $\dist(\Gamma_\tau,\Gamma)\ge c/\sqrt{n}$. By Lemma \ref{propp}, there is $C$ such that
$|w(\zeta)|\le C\sqrt{n}\|w\|_2$ for $\zeta\in S_\tau$. The result now follows from \ref{2nd}.
\end{proof}

\begin{cor} \label{goodb} There is a constant $C$ such that $\bfR_n\le Cn$ on $\C$. In particular, the rescaled $1$-point functions $R_n$
are uniformly bounded on $\C$.
\end{cor}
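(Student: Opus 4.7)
The plan is to invoke the extremal characterization of the diagonal of the reproducing kernel and then apply part \ref{3rd} of Lemma \ref{l6} uniformly over the unit ball of $\calW_n$. Recall that $\calW_n$ is the subspace of $L^2(\C,dA)$ consisting of weighted polynomials $w=pe^{-nQ^S/2}$ with $p\in\Pol(n-1)$, and $\bfK_n$ is its reproducing kernel, so
\[
\bfR_n(\zeta)=\bfK_n(\zeta,\zeta)=\sup\bigl\{\,|w(\zeta)|^2 : w\in\calW_n,\ \|w\|_2\le 1\,\bigr\}.
\]

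First I would observe that for $\zeta\in\C\setminus S$ there is nothing to prove, since $Q^S(\zeta)=+\infty$ forces $w(\zeta)=0$ for every $w\in\calW_n$, and hence $\bfR_n(\zeta)=0$. For $\zeta\in S$, take any $w=pe^{-nQ^S/2}$ with $p\in\Pol(n-1)\subset\Pol(n)$ and $\|w\|_2\le 1$. Then Lemma \ref{l6}\ref{3rd} gives a constant $C$, independent of $n$, $\zeta$, and $w$, such that
\[
|w(\zeta)|^2\le \|w\|_\infty^2\le C\,n\,\|w\|_2^2\le C\,n.
\]
Taking the supremum over such $w$ and using the extremal identity above yields $\bfR_n(\zeta)\le Cn$ on $S$, hence on all of $\C$.

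For the rescaled one-point function, recall from the definition that $R_n(z)=(n\Lap Q(0))^{-1}\bfR_n(\zeta)$ when $z=\sqrt{n\Lap Q(0)}\,\zeta$. Combining this with the bound just obtained gives
\[
R_n(z)\le \frac{C}{\Lap Q(0)},
\]
which is the asserted uniform bound on $\C$.

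There is essentially no obstacle here: the entire content of the corollary is packaged into the $\sqrt{n}$-pointwise estimate of Lemma \ref{l6}\ref{3rd}, which was itself obtained by combining the submean-value estimate of Lemma \ref{propp} on a disc of radius $c/\sqrt{n}$ inside $S_\tau$ (for $\tau=1-1/\sqrt{n}$) with the extension estimate of Lemma \ref{l6}\ref{2nd} across the thin annulus $S\setminus S_\tau$ of width $\asymp 1/\sqrt{n}$, which only costs a factor $e^{cn(1-\tau)^2}=e^{c}$. The only subtlety worth flagging is that it is crucial to use the extremal/reproducing characterization rather than the naive bound $\sum_{j=0}^{n-1}|w_{j,n}(\zeta)|^2\le n\cdot\max_j|w_{j,n}(\zeta)|^2$, which would only yield the weaker estimate $\bfR_n\le Cn^2$.
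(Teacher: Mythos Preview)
Your proof is correct and is essentially the same as the paper's: both reduce to applying Lemma~\ref{l6}\ref{3rd} to a unit-norm element of $\calW_n$, the only cosmetic difference being that the paper names the extremizer explicitly as $w=\bfk_n(\cdot,\zeta)e^{-nQ^S/2}/\sqrt{\bfk_n(\zeta,\zeta)}$ rather than phrasing it via the sup-characterization. Your additional remarks (the trivial case $\zeta\notin S$ and the explicit rescaling computation) simply flesh out details the paper leaves implicit.
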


\begin{proof} Let $\bfk_n$ be the reproducing kernel of the space of holomorphic polynomials of degree at most $n-1$ with the norm of
$L^2(e^{-nQ^S})$. Now fix $\zeta\in S$ and put $q(\eta)=\bfk_n(\eta,\zeta)/\sqrt{\bfk_n(\zeta,\zeta)}$ and $w=qe^{-nQ^S/2}$. Then $\|w\|_2=1$, and so $\|w\|_\infty\le C\sqrt{n}$ by Lemma \ref{l6}. Finally,
$\bfR_n(\zeta)=|w(\zeta)|^2.$
\end{proof}

\subsection{Discarding lower order terms}\label{disc}
Let $N_\Gamma$ be the belt \eqref{belt}. Below we fix an arbitrary point
$\zeta\in N_\Gamma.$
Also fix a number $\tau_0<1$ such that the curves $\Gamma_\tau$ are regular for all $\tau$ with $\tau_0\le \tau\le 1$. Now write
$$\tau=\tau(j)=j/n,\qquad (j\le n-1).$$
By Lemma \ref{l6}, there is a number $C$ such that
$$\tau\le \tau_0\quad  \Rightarrow\quad |w_{j,n}(\zeta)|^2\le
Cne^{-n(Q^S-\check{Q}_{\tau_0})(\zeta)}\le Cne^{-cn},$$
where $c=\inf_{N_\Gamma}\{Q-\check{Q}_{\tau_0}\}>0$. (We shall elaborate on the constant $c$ below.)
Hence $$\bfR_n(\zeta)\sim\sum_{j=\tau_0 n}^{n-1}|w_{j,n}(\zeta)|^2.$$

Next fix $j$ such that $\tau_0 n\le j\le n-\sqrt{n}\log n$, i.e.,
$\tau_0\le \tau\le 1-\delta_n$ where we write $\delta_n=\log n/\sqrt{n}$.

We will denote by $V_\tau$ the harmonic continuation of the harmonic function $\check{Q}_\tau$ on $U_\tau$ across the analytic curve $\Gamma_\tau$. Considering the growth as $\zeta\to\infty$, we obtain the basic identity
\begin{equation}\label{good}V_\tau(\zeta)=\re\calQ_\tau+\tau\log|\phi_\tau(\zeta)|^2,
\end{equation}
where $\calQ_\tau$ is the holomorphic function on $U_\tau$ with $\re \calQ_\tau=Q$ on $\Gamma_\tau$ and $\im \calQ_\tau(\infty)=0$.

The function $Q-V_\tau$, considered in a neighbourhood of the curve $\Gamma_\tau$, plays a central role for the theory.
Note that $Q-V_\tau$ vanishes to first order at $\Gamma_\tau$, and increases quadratically with the distance from $\Gamma_\tau$,
\begin{equation}\label{ridge}(Q-V_\tau)(\zeta)= 2\Lap Q(\zeta)\cdot \dist(\zeta,\Gamma_\tau)^2+\cdots.\end{equation}
The function $Q-V_\tau$ might be called a "parabolic ridge" about the curve $\Gamma_\tau$.
(The proof of \eqref{ridge} is just a case of using Taylor's formula, see e.g. \cite[Section 5.3]{AKM} for details.)

In particular, it follows from \eqref{ridge} that
there is a number $c>0$ such that, if $\zeta\in N_\Gamma$,
\begin{equation}\label{skurk}(Q-\check{Q}_\tau)(\zeta)=(Q-V_\tau)(\zeta)\ge c\dist(\zeta,S_\tau)^2.\end{equation}

By Lemma \ref{movin}, we have $\dist(\Gamma_\tau,\Gamma)\ge c(1-\tau)\ge c\delta_n$ for some constant $c>0$.
By \eqref{skurk},
$n(Q-\check{Q}_\tau)(\zeta)\ge c\log^2 n$, so
$$|w_{j,n}(\zeta)|^2\le Cne^{-n(Q-\check{Q}_\tau)(\zeta)}\le Cne^{-c\log^2 n},\qquad (j\le n-\sqrt{n}\log n).$$
We have shown that
$$\bfR_n(\zeta)\sim \sum_{j=n-\sqrt{n}\log n}^{n-1}|w_{j,n}(\zeta)|^2, \quad (n\to\infty)$$
in the sense that the difference of the left and right sides converges uniformly to zero on $N_\Gamma$, as $n\to\infty$.

\section{Foliation flow and quasipolynomials}
\subsection{Basic definitions}
Fix a large integer $n$. For an integer $j$ in the range
$n-\sqrt{n}\log n\le j\le n-1$ we put
$\tau=\tau(j)=j/n$.
Thus $1-\delta_n\le \tau< 1.$

When $n$ is large, $\Gamma_\tau$ is a real-analytic Jordan curve and thus
the map $\phi_\tau:U_\tau\to \D_e$ can be continued analytically across $\Gamma_\tau$.

Let $T=T_\epsilon=\{\zeta;\,\dist(\zeta,\Gamma)<\epsilon\}$ be an arbitrarily small but fixed tubular neighbourhood of $\Gamma$.

For a real parameter $t$ with $|t-\tau|\le 2\delta_n$, we denote by $L_{\tau,t}$ the level set
$$L_{\tau,t}=\{\zeta\in T;\, (Q-V_\tau)(\zeta)= t^2\}.$$
Of course $L_{\tau,0}=\Gamma_\tau$.

By the relation \eqref{ridge}, we see that for $t\ne 0$, $L_{\tau,t}$ is the disjoint union of two analytic Jordan curves,
$L_{\tau,t}=\Gamma_{\tau,t}^-\cup\Gamma_{\tau,t}^+$ where $\Gamma_{\tau,t}^-\subset\Int \Gamma_\tau$ and $\Gamma_{\tau,t}^+\subset\Ext\Gamma_\tau$. Define
$\Gamma_{\tau,t}=\Gamma_{\tau,t}^-$ if $-\delta_n\le t\le 0$ and $\Gamma_{\tau,t}=\Gamma_{\tau,t}^+$ if $0\le t\le \delta_n$. Then $\Gamma_{\tau,t}$ is
an analytic Jordan curve depending analytically on the parameter $t$; see
\cite[Section 3]{HW} for further details.

Let $U_{\tau,t}$ be the exterior domain of $\Gamma_{\tau,t}$ and consider the simply connected domain $\phi_\tau(U_{\tau,t})\subset \hat{\C}$. This is a slight perturbation of the exterior disk $\D_e$.

We will denote by
$$\psi_t:\D_e\to \phi_\tau(U_{\tau,t})$$
the normalized univalent mapping between the indicated domains. Note that $\psi_t$ continues analytically across $\T$ and obeys the basic relation
\begin{equation*}\label{implicit}(Q-V_\tau)\circ\phi_\tau^{-1}\circ
\psi_t\equiv t^2,\qquad \text{on}\quad \T.
\end{equation*}
By the "foliation flow emanating from $\Gamma_\tau$", we mean the family of Jordan curves
$$\Gamma_{\tau,t}=\phi_\tau^{-1}\circ\psi_t(\T),\qquad (-2\delta_n\le t\le 2\delta_n).$$

Given a point $0\in \Gamma$ and a number $\tau$ as above, it will be important to have a good estimate of the "local stopping time" $t_\infty$ when the foliation flow emanating from $\Gamma_\tau$ hits the point $0$.

To make this precise, recall that we have placed the coordinate system so that the positive axis points in the outer normal direction with respect to $S$ at the point $0\in\Gamma$. We denote by
$x_{j,n}$ the closest point to $0$ on $\Gamma_\tau$, on the negative half axis.

Let $t_\infty>0$ be the smallest value of the flow parameter $t$ such that the curve
$\phi_\tau^{-1}\circ\psi_t(\T)$ hits the point $0\in\Gamma_1$,
$$0\in \phi_\tau^{-1}\circ\psi_{t_\infty}(\T).$$

\begin{lem} Suppose that $\tau=j/n$ satisfies
$1>\tau\ge 1-\log n/\sqrt{n}$, and write
$$\gone(\zeta):=\frac {|\phi_1'(\zeta)|}{\sqrt{\Lap Q(\zeta)}},\qquad (\zeta\in\Gamma_1).$$ Then
$$t_\infty=\frac {\gone(0) (1-\tau)}{\sqrt{2}}+O((1-\tau)^2),\qquad (n\to\infty).$$
\end{lem}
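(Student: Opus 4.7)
The plan is to read off $t_\infty$ directly from the defining relation of the foliation flow and then expand both sides in powers of $1-\tau$.

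First, since the foliation curve $\Gamma_{\tau,t}$ is exactly the level set $\{(Q-V_\tau)=t^2\}$ (on the outside of $\Gamma_\tau$, for $t>0$), the condition $0\in\Gamma_{\tau,t_\infty}$ is equivalent to
\[
t_\infty^2=(Q-V_\tau)(0).
\]
So the whole task is to obtain a sharp two-term expansion of $(Q-V_\tau)(0)$ as $\tau\uparrow 1$.

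Next, I would invoke the parabolic-ridge expansion \eqref{ridge}, which gives
\[
(Q-V_\tau)(0)=2\Lap Q(\zeta_\tau)\cdot\dist(0,\Gamma_\tau)^2+O\!\left(\dist(0,\Gamma_\tau)^3\right),
\]
where $\zeta_\tau\in\Gamma_\tau$ is the foot of the perpendicular from $0$. By Lemma \ref{movin}, $\dist(0,\Gamma_\tau)=O(1-\tau)$, and since $\Lap Q$ is real-analytic one has $\Lap Q(\zeta_\tau)=\Lap Q(0)+O(1-\tau)$. Feeding Lemma \ref{movin}'s expansion
\[
\dist(0,\Gamma_\tau)=|x_\tau|=\frac{|\phi_1'(0)|}{2\Lap Q(0)}(1-\tau)+O((1-\tau)^2)
\]
into the ridge formula and collecting terms of order $(1-\tau)^2$, the $\Lap Q(0)$'s combine to produce
\[
(Q-V_\tau)(0)=\frac{|\phi_1'(0)|^2}{2\Lap Q(0)}(1-\tau)^2+O((1-\tau)^3).
\]

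Finally, taking the positive square root and using $\sqrt{1+O(1-\tau)}=1+O(1-\tau)$,
\[
t_\infty=\frac{|\phi_1'(0)|}{\sqrt{2\Lap Q(0)}}(1-\tau)+O((1-\tau)^2)=\frac{\gone(0)(1-\tau)}{\sqrt{2}}+O((1-\tau)^2),
\]
which is the claim. The only step that requires genuine care is keeping track of the error terms in the ridge expansion: I need \eqref{ridge} to be sharp enough that the cubic remainder is truly $O(\dist^3)$ uniformly in $\tau\in[1-\delta_n,1]$, with constants controlled by the real-analyticity of $Q$ and the regularity of $\Gamma_\tau$ on the interval $\tau_0\le \tau\le 1$. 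Given that Lemma \ref{movin} and \eqref{ridge} are already stated with uniform control, the expansion propagates cleanly to the square root and no additional technicality appears.
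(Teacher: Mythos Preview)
Your proof is correct and follows essentially the same route as the paper: identify $t_\infty^2=(Q-V_\tau)(0)$ from the level-set definition of the foliation flow, expand $(Q-V_\tau)(0)$ via the parabolic-ridge formula \eqref{ridge} together with the distance expansion of Lemma \ref{movin}, and take the square root. The only cosmetic difference is that the paper phrases the intermediate step in terms of $\phi_\tau'(x_{j,n})$ and $\Lap Q(x_{j,n})$ with the ``$\sim$'' notation, whereas you pass directly to $\phi_1'(0)$ and $\Lap Q(0)$ with explicit $O((1-\tau)^3)$ bookkeeping; both lead to the same conclusion.
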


\begin{proof}
It follows by Taylor's formula and Lemma \ref{movin} that
$$(Q-V_\tau)(0)\sim 2\Lap Q(x_{j,n})x_{j,n}^2\sim(1-\tau)^2\frac {|\phi_\tau'(x_{j,n})|^2}
{2\Lap Q(x_{j,n})}.$$
Moreover if $\eta\in\T$ is such that $\phi_\tau^{-1}\circ\psi_{t_\infty}(\eta)=0$ then
$$(Q-V_\tau)\circ \phi_\tau^{-1}\circ\psi_{t_\infty}(\eta)= t_\infty^2,$$
so we obtain
$$t_\infty^2 \sim (1-\tau)^2\frac {|\phi_\tau'(x_{j,n})|^2}
{2\Lap Q(x_{j,n})}.$$
The lemma follows by taking square-roots.
\end{proof}

By the same token, if we define
\begin{equation*}\label{gzdeff}\gtau(\zeta)=\frac {|\phi_\tau'(\zeta)|}{\sqrt{\Lap Q(\zeta)}},\qquad \zeta\in\Gamma_\tau,\end{equation*}
then the flow emanating from the point $\zeta\in \Gamma_\tau$ will hit $\Gamma=\Gamma_1$ after time $t_\infty(\zeta)$ where
\begin{equation*}\label{tinef}t_\infty(\zeta)=\frac {\gtau(\zeta)(1-\tau)}{\sqrt{2}}+O((1-\tau)^2).\end{equation*}

In symbols, given a point $\zeta\in \Gamma_\tau$, this \textit{stopping time} $t_\infty(\zeta)$ is defined to be the smallest number $t>0$ such that
$$\phi_\tau^{-1}\circ\psi_{t}\circ\phi_\tau(\zeta)\in\Gamma.$$

\subsection{Quasipolynomials}\label{qps}

Fix numbers $\tau_0<1$ and $\epsilon>0$, and an integer $n_0$, such that when $n\ge n_0$ and $\tau:=j/n$ satisfies $1-2\delta_n<\tau<1$,
then $\phi_\tau^{-1}$ continues analytically to a univalent mapping defined on the exterior disk $\D_e(\tau_0-\epsilon)$.

Now fix $\tau=j/n$ with $1-2\delta_n<\tau<1$ and
retain the notation $\phi_\tau^{-1}$ for the extended map.
We define two sets $K$ and $X$ with $K\subset X\subset\Pc S$ by
\begin{equation*}\label{sets}K:=\C\setminus \phi_\tau^{-1}(\D_e(\tau_0-\epsilon)),\qquad
X:=\C\setminus \phi_\tau^{-1}(\D_e(\tau_0)).\end{equation*}

Similar to \cite{HW}, we define a "hard edge quasipolynomial" $F_{j,n}$ on $\C\setminus K$ by
\begin{equation}\label{quasi}F_{j,n}=n^{1/4}\cdot \phi_\tau'\cdot \phi_\tau^j\cdot e^{n\calQ_\tau/2}\cdot f_\tau\circ\phi_\tau,\qquad (\tau=j/n)\end{equation}
where
$$\qquad f_\tau:=\pi^{-1/4}(\phi_\tau')^{-1/2}e^{\calH_\tau\circ\phi_\tau^{-1}/2}.$$

Here the functions $\calQ_\tau$ and $\calH_\tau$ are holomorphic on $\hat{\C}\setminus K$ and satisfy
\begin{equation}\label{ndefs}\re \calQ_\tau=Q\qquad
\text{and}\qquad
\re\calH_\tau=\log \sqrt{\Lap Q}-\log\fif_{j,n}\qquad \text{on}\quad  \Gamma_\tau,\end{equation}
where $\fif_{j,n}$ is defined on $\Gamma_\tau$ by
\begin{equation*}\label{xjn}\fif_{j,n}(\zeta):=\fif[\xi_{j,n}\cdot\gtau
(\zeta)],\qquad \xi_{j,n}:=\frac {j-n}{\sqrt{n}}, \quad \zeta\in\Gamma_\tau,\end{equation*}
where as always $\fif$ is the free boundary function \eqref{ffun}.

Since the functions $Q$, $\log\sqrt{\Lap Q}$, and $\log \fif_{j,n}$ are all real-analytic on $\Gamma_\tau$, we can indeed find holomorphic functions $\calQ_\tau$ and $\calH_\tau$ having these
properties, provided that $\tau_0<1$ and $\delta>0$ are chosen appropriately. We fix $\calQ_\tau$ and $\calH_\tau$ uniquely by requiring that their imaginary parts vanish at $\infty$.

Note that $F_{j,n}$ is holomorphic on $\C\setminus K$. Moreover, we can write
$$F_{j,n}=n^{1/4}\Lambda_{j,n}[f_\tau],$$ where
$\Lambda_{j,n}$, the "positioning operator", is defined by
$$\Lambda_{j,n}[f]=\phi_\tau'\cdot \phi_\tau^j\cdot e^{n\calQ_\tau/2}\cdot f\circ\phi_\tau.$$
It is easy to show that $\Lambda_{j,n}$ has the following isometry property: for all suitable functions $f,g$,
\begin{equation}\label{isometry}\int_{\C\setminus K}\Lambda_{j,n}[f]\overline{\Lambda_{j,n}[g]}\,
e^{-nQ}
\, dA=\int_{\D_e(\tau_0-\delta)}f\bar{g}\, e^{-n(Q-V_\tau)\circ\phi_\tau^{-1}}
\, dA.\end{equation}
Cf. \cite{HW}, Section 3.

We observe for later use that, by \eqref{good}, \eqref{ridge} and the definition of $F_{j,n}$,
\begin{equation}\label{gest}
|F_{j,n}|^2e^{-nQ}=\sqrt{n}|\phi_\tau'||f_\tau\circ\phi_\tau|^2e^{-n(Q-V_\tau)}
\le C\sqrt{n}e^{-cnd_\tau^2},\qquad (d_\tau(\zeta):=\dist(\zeta,\Gamma_\tau))
\end{equation}
on $\C\setminus K$, where $c$ is a positive constant.

\begin{rem} If in \eqref{ndefs} we redefine the function $\calH_\tau$ by $\re \calH_\tau=\log\sqrt{\Lap Q}$ we obtain precisely
the quasipolynomials used in the free boundary case in \cite{HW}.
\end{rem}

\subsection{Integration of quasipolynomials}
For each point $\zeta=\phi_\tau^{-1}(\eta)\in\Gamma_\tau$ we use the stopping time
 $t_\infty(\zeta)\sim \gtau(\zeta)\cdot (1-\tau)/\sqrt{2}$ as follows.

 For $j,n$ with $n-\sqrt{n}\log n< j<n$ we consider the two closely related domains
\begin{equation*}\label{trunc}\begin{split}D_{j,n}&=\{\psi_t(\eta)\in\C;\quad \eta\in\T, \quad -2\delta_n\le t\le t_\infty(\phi_\tau^{-1}(\eta))\},\\
\tilde{D}_{j,n}&=\{(\eta,t)\in\T\times\R;\qquad\quad\, -2\delta_n\le t\le t_\infty(\phi_\tau^{-1}(\eta))\}.\\
\end{split}
\end{equation*}
Here, as always, $\tau=j/n$.

Note that the definition of $D_{j,n}$ is set up so that the outer boundary of $\phi_\tau^{-1}(D_{j,n})$ coincides with the outer boundary of $S$,
$$\Pc \phi_{\tau}^{-1}(D_{j,n})=\Pc S.$$

Following \cite{HW} we define the
flow map $\Psi$ by
\begin{equation*}\label{fflow}\Psi:\tilde{D}_{j,n}\to D_{j,n},\qquad \Psi(t,\eta)=\psi_t(\eta).\end{equation*}

A consideration of the Jacobian of $\Psi$ shows that
if $f$ is an integrable function on $D_{j,n}$, then
\begin{equation}\label{cov0}\int_{D_{j,n}} f\, dA=2\int_{\tilde{D}_{j,n}}  f\circ\psi_t \cdot
\re(-\bar{\eta}\cdot \d_t\psi_t(\eta)\cdot\overline{\psi_t'(\eta)})\, dt \,ds(\eta).\end{equation}

\smallskip

We shall need the following lemma, which is obtained by a straightforward adaptation of the main lemma in \cite[section 4]{HW}.

\begin{lem} \label{hwlem} In the above situation, we have for $\eta\in\T$ that (as $n\to\infty$)
\begin{equation*}\label{cov}
\begin{split}
2\sqrt{n}|f_\tau\circ\psi_t(\eta)|^2&e^{-n(Q-V_\tau)\circ\psi_t(\eta)}\re(-\bar{\eta}\cdot \d_t\psi_t(\eta)\cdot\overline{\psi_t'(\eta)})
=\sqrt{\frac n {\pi}}\cdot \frac {e^{-nt^2}}
{\fif_{j,n}\circ \phi_\tau^{-1}(\eta)}
\cdot (1+O(t)).\\
\end{split}
\end{equation*}
\end{lem}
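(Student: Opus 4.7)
The plan is to evaluate each of the three factors on the left-hand side separately at $t=0$ and then extend to small $t$ by real-analyticity of the flow and of all auxiliary functions. The Gaussian factor comes for free: the defining relation $(Q-V_\tau)\circ\phi_\tau^{-1}\circ\psi_t\equiv t^2$ on $\T$ gives $e^{-n(Q-V_\tau)\circ\psi_t(\eta)}=e^{-nt^2}$ exactly. For $|f_\tau|^2$, writing $\zeta=\phi_\tau^{-1}(\eta)\in\Gamma_\tau$ and combining the definition of $f_\tau$ with the boundary condition $\re\calH_\tau=\log\sqrt{\Lap Q}-\log\fif_{j,n}$ on $\Gamma_\tau$ from \eqref{ndefs} yields
$$|f_\tau(\eta)|^2=\frac{1}{\sqrt{\pi}\,\gtau(\zeta)\,\fif_{j,n}(\zeta)}.$$
This is the only place where the argument departs from the free-boundary computation in \cite{HW}: the factor $1/\fif_{j,n}$ on the right-hand side of the lemma traces back entirely to the $-\log\fif_{j,n}$ term in \eqref{ndefs}.

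The Jacobian factor is the main work. Since $\phi_\tau(U_\tau)=\D_e$, the normalized conformal self-map $\psi_0$ is the identity, so $\psi_0'=1$ and I need $\d_t\psi_0(\eta)$. Differentiating the flow relation once in $t$ gives no information at $t=0$, so I use instead the quadratic expansion \eqref{ridge}: combining $(Q-V_\tau)(\zeta)=2\Lap Q(\zeta)\,\dist(\zeta,\Gamma_\tau)^2+O(\dist^3)$ with $(Q-V_\tau)\circ\phi_\tau^{-1}\circ\psi_t\equiv t^2$ forces the normal distance of $\phi_\tau^{-1}\circ\psi_t(\eta)$ to $\Gamma_\tau$ to be $t/\sqrt{2\Lap Q(\zeta)}+O(t^2)$. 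Pushing this normal velocity back to the $\T$-plane through the conformal factor $|\phi_\tau'(\zeta)|$ yields $|\re(-\bar\eta\,\d_t\psi_0(\eta))|=\gtau(\zeta)/\sqrt{2}$.

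Multiplying the three contributions, the $\gtau$ from the Jacobian cancels the $1/\gtau$ inside $|f_\tau|^2$, leaving, up to a numerical constant, the expression $\sqrt{n/\pi}\cdot e^{-nt^2}/\fif_{j,n}(\zeta)$ of the desired form. The corrections $(1+O(t))$ arise from expanding $|f_\tau\circ\psi_t|^2$ and $\re(-\bar\eta\,\d_t\psi_t\overline{\psi_t'})$ in $t$ about $t=0$ using real-analyticity of $\phi_\tau$, $\Lap Q$, $\calH_\tau$, $\fif_{j,n}$ and the flow $\psi_t$. The main obstacle is the Jacobian step, a Hadamard-type variational computation for the conformal map family $\psi_t$ linked to the quadratic vanishing of $Q-V_\tau$ on $\Gamma_\tau$; apart from this and the $1/\fif_{j,n}$ factor from the modified boundary condition on $\calH_\tau$, the argument is a direct transcription of the main lemma in \cite[Section~4]{HW}.
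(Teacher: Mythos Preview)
Your proposal is correct and follows the same route as the paper. The paper's own ``proof'' is not a self-contained argument but a one-paragraph reduction to \cite[Lemma~4.1.2]{HW}: it records that the present $f_\tau$ equals the free-boundary amplitude $f_{j,n}^{\langle 0\rangle}$ of \cite{HW} multiplied by $e^{\calK_\tau\circ\phi_\tau^{-1}/2}$ with $\re\calK_\tau=-\log\fif_{j,n}$ on $\Gamma_\tau$, and that $\psi_t(\eta)=\eta+O(t)$; the lemma then falls out of relation (4.1.3) in \cite{HW}. Your plan unpacks exactly this computation---the exact Gaussian from the flow relation, the evaluation $|f_\tau(\eta)|^2=(\sqrt{\pi}\,\gtau\,\fif_{j,n})^{-1}$ from the boundary condition \eqref{ndefs}, and the Hadamard-type Jacobian from the ridge expansion \eqref{ridge}---and you correctly isolate the single hard-edge modification (the $-\log\fif_{j,n}$ term in $\re\calH_\tau$) as the sole source of the $1/\fif_{j,n}$ on the right. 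The two presentations differ only in that you sketch the \cite{HW} calculation explicitly rather than citing it.
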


\begin{proof}[Remark on the proof] This follows from the relation (4.1.3) in \cite[Lemma 4.1.2]{HW}. To translate between the settings we observe that our present "$Q$" is denoted "$2Q$" in \cite{HW}, our present $f_\tau$ equals
to the function denoted $f_{j,n}^{\langle0\rangle}$ in \cite{HW}, \textit{multiplied} with the function $e^{\calK_\tau\circ\phi_\tau^{-1}/2}$ where $\re\calK_\tau=-\log\fif_{j,n}$ on $\Gamma_\tau$,
and finally $\psi_t(\eta)=\eta+O(t)$ for all $\eta\in\T$.
\end{proof}

Using the isometry property \eqref{isometry} and
the asymptotic in Lemma \ref{hwlem}, we obtain that
\begin{equation}\label{norm2}\begin{split}\int_{\phi_\tau^{-1}(D_{j,n})}
&|F_{j,n}|^2 e^{-nQ}\, dA=\sqrt{n}\int_{D_{j,n}}|f_\tau|^2e^{-n(Q-V_\tau)\circ\phi_\tau^{-1}}\, dA\\
&=\sqrt{n}\int_{\tilde{D}_{j,n}}
|f_\tau\circ\psi_t|^2 e^{-n(Q-V_\tau)\circ\phi_\tau^{-1}\circ \psi_t}\,
\cdot \re(-\bar{\eta}\cdot \d_t\psi_t(\eta)\cdot\overline{\psi_t'(\eta)})\, dt \,ds(\eta)\\
&= \int_\T \frac {ds(\eta)} {\fif(\gtau(\phi_\tau^{-1}(\eta))\xi_{j,n})}\,\frac {\sqrt{n}}{\sqrt{\pi}}\int_{-2\delta_n}^{t_\infty(\phi_\tau^{-1}(\eta))}e^{-nt^2}(1+O(t))\, dt\\
&=\int_\T (1+O(1/\sqrt{n})) ds(\eta)=1+O(1/\sqrt{n}).\end{split}
\end{equation}
Between the second and third line, we replaced $\fii_{j,n}\circ\psi_t$ by $\fii_{j,n}$; the associated error is absorbed it the $O(t)$-term.
Between the third and fourth line, we used the asymptotic
$t_\infty(\phi_\tau^{-1}(\eta))\sim \ell_\tau(\phi_\tau^{-1}(\eta))\cdot (1-\tau)/\sqrt{2}$ in the upper $t$-integration limit.

\smallskip

We now fix, once and for all, a smooth function $\chi_0$ with $\chi_0=0$ on $K$ and $\chi_0=1$ on $\C\setminus X$.

\begin{lem} \label{11} There is a constant $c>0$ such when $n-\sqrt{n}\log n<j<n$,
\begin{equation*}\label{norm3}\int_{S}  |\chi_0\cdot F_{j,n}|^2e^{-nQ}\, dA=1+O(1/\sqrt{n}),\qquad (n\to\infty).\end{equation*}
\end{lem}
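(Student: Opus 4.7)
The plan is to compare the integral $\int_S |\chi_0 F_{j,n}|^2 e^{-nQ}\, dA$ with the already-computed $\int_{\phi_\tau^{-1}(D_{j,n})} |F_{j,n}|^2 e^{-nQ}\, dA = 1 + O(n^{-1/2})$ from \eqref{norm2}, and to control the difference by the pointwise estimate \eqref{gest}.

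The first step is to check that the cutoff $\chi_0$ and the droplet $S$ both act trivially on the flow region; more precisely, I claim $\phi_\tau^{-1}(D_{j,n}) \subset S \cap \{\chi_0 = 1\}$. Since $\psi_0$ is the identity on $\D_e$, we have $\psi_t(\eta) = \eta + O(t)$ uniformly on $\T$, and along $D_{j,n}$ the parameter satisfies $|t| \le 2\delta_n$ on the inner sweep and $0 \le t \le t_\infty(\phi_\tau^{-1}(\eta)) = O(\delta_n)$ on the outer sweep (using the stopping-time lemma together with $1 - \tau \le \delta_n$). Thus $D_{j,n}$ is contained in a thin annulus $\{1 - C\delta_n \le |\eta| \le 1 + C\delta_n\}$, which for large $n$ lies inside $\D_e(\tau_0)$ because $\tau_0 < 1$ is fixed; equivalently $\phi_\tau^{-1}(D_{j,n}) \cap X = \emptyset$ and so $\chi_0 \equiv 1$ there. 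The inclusion $\phi_\tau^{-1}(D_{j,n}) \subset S$ follows because the outer sweep is cut off exactly when the flow meets $\Gamma = \d \Pc S$, while the flow itself stays inside the analytically extended $U_\tau$, which for $\tau$ close to $1$ avoids the bounded complementary components of $S$.

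With these two inclusions in hand one can decompose
$\int_S |\chi_0 F_{j,n}|^2 e^{-nQ}\, dA = \int_{\phi_\tau^{-1}(D_{j,n})} |F_{j,n}|^2 e^{-nQ}\, dA + E$, where $E = \int_{(S \setminus \phi_\tau^{-1}(D_{j,n})) \setminus K} |\chi_0 F_{j,n}|^2 e^{-nQ}\, dA$ (using that $\chi_0 \equiv 0$ on $K$). Any point $\zeta$ in the error region lies in $S$ but outside the $c\delta_n$-neighbourhood of $\Gamma_\tau$ covered by the flow image, so $d_\tau(\zeta) \ge c\delta_n$. The bound \eqref{gest} then yields $|F_{j,n}(\zeta)|^2 e^{-nQ(\zeta)} \le C\sqrt n\, e^{-c\log^2 n}$ throughout this region, and integrating over the bounded set $S$ gives $E = O(\sqrt n\, e^{-c\log^2 n}) = o(n^{-1/2})$, uniformly in $j$ in the given range. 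Combined with \eqref{norm2}, this proves the lemma.

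The main delicate point is the twin inclusion $\phi_\tau^{-1}(D_{j,n}) \subset S \cap \{\chi_0 = 1\}$, which must hold uniformly over the range $n - \sqrt n \log n < j < n$. It rests on the two quantitative ingredients above, namely the near-identity estimate $\psi_t - \mathrm{id} = O(t)$ on $\T$ and the stopping-time bound $t_\infty = O(\delta_n)$; both follow from the foliation setup, but must be applied uniformly for $\tau \in [1-\delta_n, 1)$.
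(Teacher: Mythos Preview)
Your proof is correct and follows essentially the same route as the paper. Both arguments split $\int_S|\chi_0 F_{j,n}|^2e^{-nQ}\,dA$ into the contribution over $\phi_\tau^{-1}(D_{j,n})$, handled by \eqref{norm2}, and the residual over $S\setminus\phi_\tau^{-1}(D_{j,n})$, which is bounded by $C\sqrt{n}\,e^{-c\log^2 n}$ via the pointwise estimate \eqref{gest} together with the fact that points of the error region lie at distance $\ge c\delta_n$ from $\Gamma_\tau$. The paper is terser---it simply records the error bound \eqref{fex} and says the statement follows from \eqref{norm2}---whereas you make explicit the inclusions $\phi_\tau^{-1}(D_{j,n})\subset S$ and $\phi_\tau^{-1}(D_{j,n})\subset\{\chi_0=1\}$ that justify the decomposition; these are exactly the points the paper leaves implicit.
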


\begin{proof} Consider with $\tau=j/n$ the contribution
\begin{align*}\int_{S\setminus \phi_\tau^{-1}(D_{j,n})}\chi_0^2\cdot |F_{j,n}|^2e^{-nQ}\, dA
=n^{1/2}
\int_{\phi_\tau(S)\setminus D_{j,n}}\tilde{\chi}_0^2\cdot |f_\tau|^2
\cdot e^{-n(Q-V_\tau)\circ\phi_\tau^{-1}}\, dA.
\end{align*}
Here $\tilde{\chi}_0$ is given by $\tilde{\chi}_0=\chi_0\circ\phi_\tau$ wherever $\phi_\tau$ is defined, and $\tilde{\chi}_0=0$ elsewhere.

Since $f_\tau$ is bounded on $\phi_\tau(S\setminus K)$, we see from \eqref{gest} that
\begin{equation}\label{fex}\begin{split}\int_{S\setminus \phi_\tau^{-1}(D_{j,n})}&\chi_0^2\cdot |F_{j,n}|^2e^{-nQ}\, dA\le C\sqrt{n}
\int_{S\setminus \phi_\tau^{-1}(D_{j,n})}e^{-n(Q-V_\tau)}\, dA\le C\sqrt{n}e^{-c\log^2 n}.\\
\end{split}
\end{equation}
The statement thus follows from \eqref{norm2}.
\end{proof}

We now define a weighted quasipolynomial $w_{j,n}^\sharp$ by
\begin{equation}\label{app2}w_{j,n}^\sharp=\chi_0\cdot F_{j,n}\cdot e^{-nQ^S/2}.\end{equation}
(It is understood
that $w_{j,n}^\sharp=0$ on $K$.)

\begin{cor} \label{bom} For $n-\sqrt{n}\log n<j<n$, we have that $\|w_{j,n}^\sharp\|_2=1+O(1/\sqrt{n})$.
\end{cor}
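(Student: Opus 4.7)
The corollary is essentially immediate from the preceding Lemma \ref{11} once one unwinds the definition. Here is how I would carry it out.

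The plan is to compute $\|w_{j,n}^\sharp\|_2^2$ directly from the definition \eqref{app2}. Since $Q^S = +\infty$ on $\C \setminus S$, the factor $e^{-nQ^S/2}$ vanishes off the droplet, and on $S$ it equals $e^{-nQ/2}$. Consequently the quasipolynomial-based weighted function $w_{j,n}^\sharp$ is supported in $S$, and we have the identity
\begin{equation*}
\|w_{j,n}^\sharp\|_2^2 = \int_\C |\chi_0 \cdot F_{j,n}|^2 \, e^{-nQ^S} \, dA = \int_S |\chi_0 \cdot F_{j,n}|^2 \, e^{-nQ} \, dA.
\end{equation*}

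The second step is to invoke Lemma \ref{11} directly, which (under the hypothesis $n - \sqrt{n}\log n < j < n$) gives that the last integral equals $1 + O(1/\sqrt{n})$ as $n \to \infty$. Finally, taking square roots and using $\sqrt{1 + x} = 1 + O(x)$ for small $x$, we obtain $\|w_{j,n}^\sharp\|_2 = 1 + O(1/\sqrt{n})$, as claimed.

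There is no genuine obstacle here: all the analytic work (the foliation-flow integration, the control of the tail outside $\phi_\tau^{-1}(D_{j,n})$ via the exponential decay estimate \eqref{fex}, and the identification of the leading constant via Lemma \ref{hwlem}) has already been done in establishing Lemma \ref{11}. The corollary is simply a repackaging of that lemma in the $L^2$-norm language that will be convenient for the subsequent analysis of $\bfR_n$ via the sum \eqref{redu}.
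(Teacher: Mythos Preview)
Your proof is correct and matches the paper's approach: the corollary is stated immediately after Lemma \ref{11} and the definition \eqref{app2} with no further argument, so it is intended to follow exactly as you describe, by unwinding the definition of $w_{j,n}^\sharp$ and invoking the lemma.
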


We shall see that, in the given range of $j$'s, the function $w_{j,n}^\sharp$ is a good approximation to the $j$:th weighted
orthonormal polynomial $w_{j,n}=p_{j,n}e^{-nQ^S/2}$, at least if we assume radial symmetry of the potential. A few boundary profiles of probability densities $|w_{j,n}|^2$  are
depicted in Figure 4.

\begin{figure}[ht]\label{m_graph}
\begin{center}

\includegraphics[width=.3\textwidth]{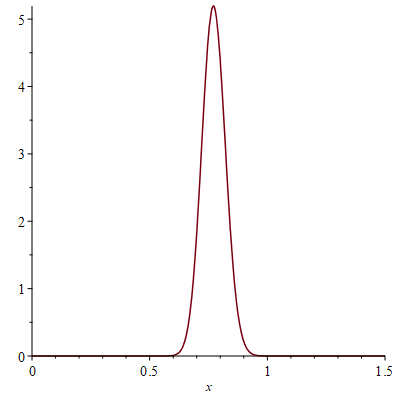}
\hspace{.03\textwidth}
\includegraphics[width=.3\textwidth]{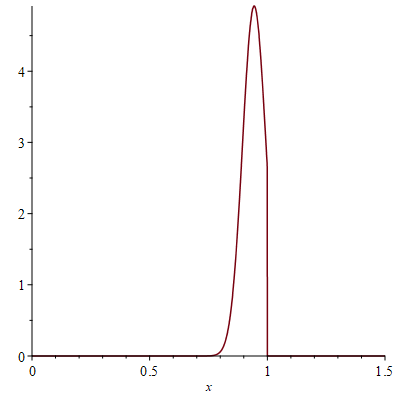}
\hspace{.03\textwidth}
\includegraphics[width=.3\textwidth]{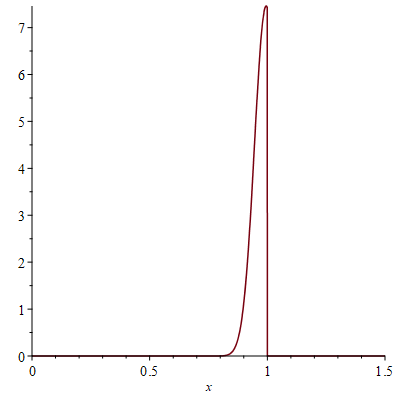}
\end{center}
\caption{Boundary profiles of squared weighted orthonormal polynomials $|w_{j,n}|^2$ with $n=101$ and $j=60,90,100$, respectively. (Here $\gone(0)=1$.)}
\end{figure}

\subsection{Approximate orthogonality} \label{shit} In the spirit of \cite[Section 4.8]{HW}, we now prove that the weighted quasipolynomial $w_{j,n}^\sharp$ of \eqref{app2} has the following "approximate orthogonal property".

\begin{lem}\label{appop} Assume that $Q$ is radially symmetric.
Suppose that $n-\sqrt{n}\log n\le j<n$ and pick a weighted polynomial $w=pe^{-nQ^S/2}$ with $\degree(p)\le j-1$. Then
we have the estimate
$$\left|\int_{S} w\bar{w}_{j,n}^\sharp\, dA\right|\le Cn^{-1/2}\|w\|_2.$$
\end{lem}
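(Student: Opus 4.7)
The plan is to exploit the radial symmetry of $Q$ to reduce $F_{j,n}$ to a pure monomial, at which point the required orthogonality becomes exact.

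First I would observe that every object in the construction inherits radial symmetry: each droplet $S_\tau$ is a disk or annulus, the curves $\Gamma_\tau$ are concentric circles $\{|\zeta|=r_\tau\}$, and the normalized exterior conformal map is $\phi_\tau(\zeta)=\zeta/r_\tau$. In particular $\phi_\tau'\equiv 1/r_\tau$ is constant, and the restrictions of $Q$, $\Lap Q$, and $\fif_{j,n}$ to $\Gamma_\tau$ are constant in $\zeta$. Hence the holomorphic functions $\calQ_\tau$ and $\calH_\tau$, being holomorphic on $\hat\C\setminus K$ with constant real parts on $\Gamma_\tau$ and imaginary parts vanishing at $\infty$, must themselves be (real) constants. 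Substituting into \eqref{quasi}, the quasipolynomial collapses to
$$F_{j,n}(\zeta)=A_{j,n}\,\zeta^{j}$$
for a nonzero constant $A_{j,n}=A_{j,n}(Q)$.

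Next, since the sets $K$ and $X$ are concentric disks centered at the origin, I would choose the cutoff $\chi_0$ to be radial: $\chi_0(\zeta)=\rho(|\zeta|)$ for some smooth $\rho\colon[0,\infty)\to[0,1]$. With this choice
$$w_{j,n}^\sharp(\zeta)=A_{j,n}\,\zeta^{j}\,\rho(|\zeta|)\,e^{-nQ(|\zeta|)/2}\,\1_S(\zeta),$$
i.e. the monomial $\zeta^j$ times a radial weight. Expanding $p=\sum_{k=0}^{j-1}c_k\zeta^k$ and passing to polar coordinates $\zeta=re^{i\theta}$, each contribution to $\int_S w\bar w_{j,n}^\sharp\,dA$ factors as a radial integral in $r$ multiplied by the angular integral $\int_0^{2\pi}e^{i(k-j)\theta}\,d\theta$, which vanishes for every $k<j$. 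Therefore $\int_S w\bar{w}_{j,n}^\sharp\,dA=0$ identically, which trivially satisfies the claimed bound.

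The radial case thus has no real obstacle: the approximate orthogonality stated in the lemma holds as exact orthogonality. The only substantive step is verifying that $\calQ_\tau$ and $\calH_\tau$ are globally constant on $\hat\C\setminus K$, which follows from uniqueness of bounded harmonic extension on the exterior disk together with the assumed analytic continuation across $\Gamma_\tau$. The radiality hypothesis is precisely what annihilates the error terms one would have to track in a general-$Q$ treatment modeled on \cite{HW}.
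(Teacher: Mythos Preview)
Your argument is correct and much more direct than the paper's. In the radially symmetric case you rightly observe that $\Gamma_\tau$ is a circle, $\phi_\tau(\zeta)=\zeta/r_\tau$, and the boundary data for $\calQ_\tau$ and $\calH_\tau$ are constant, forcing these functions to be real constants; hence $F_{j,n}=A_{j,n}\zeta^j$, and with a radial cutoff $\chi_0$ (legitimate since $K$ and $X$ are concentric disks) the inner product $\int_S w\bar w_{j,n}^\sharp\,dA$ vanishes \emph{exactly} by angular orthogonality of monomials. The only small point worth making explicit is that $S$ itself is rotationally invariant, so the domain of integration respects the polar decomposition.

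The paper proceeds very differently. It keeps the full foliation-flow machinery in place, transfers the inner product to $D_{j,n}$ via the isometry \eqref{isometry}, writes the integrand as $h_\tau\cdot|f_\tau|^2e^{-n(Q-V_\tau)\circ\phi_\tau^{-1}}$ with $h_\tau=g_\tau/f_\tau$ holomorphic near $\overline{\D}_e$ and vanishing at $\infty$, applies Lemma \ref{hwlem} to produce a Gaussian factor $e^{-nt^2}$, and then splits the result into $I_1+I_2$. The term $I_2$ is controlled by Cauchy--Schwarz (yielding the $n^{-1/2}$), and radial symmetry enters only at the final step: it forces the stopping time $t_\infty$ to be constant on $\Gamma_\tau$, so that $\tilde D_{j,n}=\T\times(-2\delta_n,t_\infty)$ is a product and $I_1=0$ by the mean-value property of $h_\tau\circ\psi_t$.

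What each approach buys: yours is an elementary one-paragraph proof, but it exploits radial symmetry from the outset and offers no handle on the general case. The paper's argument is deliberately written to isolate the \emph{single} place where symmetry is used (the constancy of $t_\infty$), consistent with its stated aim of setting up a framework that might eventually handle non-radial potentials. Your proof also gives a slightly stronger conclusion---the inner product is zero rather than $O(n^{-1/2})$---which the paper's route does not yield because the $I_2$ term is genuinely nonzero in its decomposition.
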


 \begin{proof} We fix a holomorphic polynomial $p$ of degree at most $j-1$, and
define a holomorphic function $g_\tau$ by $\Lambda_{j,n}[g_\tau]=p$. A computation gives
$$g_\tau(\eta)=\eta^{-j}\cdot
[\phi_\tau^{-1}]'(\eta)
e^{-n\calQ_\tau\circ\phi_\tau^{-1}(\eta)/2}\cdot (p\circ\phi_\tau^{-1})(\eta),\qquad (\tau=j/n).$$
Observe that $g_\tau(\infty)=0$.

We must estimate the integral
\begin{equation}\label{inte}\int_{S} \chi_0 \cdot p\cdot \bar{F}_{j,n}\, e^{-nQ}= n^{1/4}\int_{\phi_\tau(S\setminus K)} \tilde{\chi}_0 \cdot g_\tau\cdot\bar{f}_\tau\, e^{-n(Q-V_\tau)\circ\phi_\tau^{-1}/2}.\end{equation}

By \eqref{fex} and the Cauchy-Schwarz inequality we have, for small enough $c>0$,
\begin{align*}\left|\int_{S\setminus \phi_\tau^{-1}(D_{j,n})}\chi_0 \cdot p\cdot \bar{F}_{j,n}\, e^{-nQ}\,\right|
&
\le Ce^{-c\log^2 n}\|w\|_2.
\end{align*}

It hence suffices to estimate the integral
\begin{equation}\label{gesto}\int_{D_{j,n}}g_\tau\bar{f}_\tau e^{-n(Q-V_\tau)\circ\phi_\tau^{-1}}
=\int_{D_{j,n}}h_\tau\cdot |f_\tau|^2\, e^{-n(Q-V_\tau)\circ\phi_\tau^{-1}}.
\end{equation}
where we write
$h_\tau=g_\tau/f_\tau.$
Note that $h_\tau$
is holomorphic in a neighbourhood of $\overline{\D}_e$ and vanishes at $\infty$ (because $f_\tau(\infty)>0$).

Note that Lemma \ref{hwlem} implies that the last integral in \eqref{gesto} can be written
$$\int_{\tilde{D}_{j,n}}\frac {h_\tau\circ \psi_t(\eta)}{\fii(\xi_{j,n}\gtau(\phi_\tau^{-1}(\eta)))}\cdot e^{-nt^2}(1+O(t))\, ds(\eta)\, dt=I_1+I_2,$$
where $I_1$ is formed by picking "$1$" in the parenthesis $(1+O(t))$ and $I_2$ corresponds to "$O(t)$".

The integral $I_2$ is easily estimated by the following argument.
Since $1/f_\tau$ is bounded on $D_{j,n}$ we have $|h_\tau|\le C|g_\tau|$, and so
\begin{align*}|I_2|\le C\int_{\tilde{D}_{j,n}}|t|e^{-nt^2}|g_\tau\circ\psi_t(\eta)|\, dt ds(\eta).\end{align*}
By the Cauchy-Schwarz inequality and the isometry property, we see that
\begin{align*}
|I_2| &\le C(\int_{D_{j,n}}|g_\tau|^2e^{-n(Q-V_\tau)\circ\phi_\tau^{-1}}\, dA)^{1/2}(\int_{\R}t^2e^{-nt^2}\, dt)^{1/2}\\
&\le Cn^{-3/4}(\int_{\phi_\tau^{-1}(D_{j,n})}|p|^2e^{-nQ}\, dA)^{1/2}\le Cn^{-3/4}\|w\|_2.
\end{align*}

We must finally estimate the integral
$$I_1=\int_{\tilde{D}_{j,n}}\frac {h_\tau\circ \psi_t(\eta)}{\fii(\xi_{j,n}\gtau(\phi_\tau^{-1}(\eta)))}\cdot e^{-nt^2}\, ds(\eta)\, dt.$$
The estimation of $I_1$ is in general nontrivial, but becomes very simple if the stopping time $t_\infty$ is uniform on $\Gamma_\tau$, i.e., if
\begin{equation}\label{unif}\tilde{D}_{j,n}=\T\times (-2\delta_n,t_\infty)\end{equation}
where $t_\infty>0$ is a constant. It is easily seen that \eqref{unif} holds for radially symmetric potentials.

If \eqref{unif} is satisfied, then by the mean-value property of the function $h_\tau\circ\psi_t$,
$$I_1= \frac 1 {\sqrt{\pi}\fii(\xi_{j,n}k)}\int_{-2\delta_n}^{t_\infty}e^{-nt^2}\, dt\int_\T h\circ\psi_t(\eta)\, ds(\eta)=0,$$
where we wrote $k$ for the constant value of $\gtau\circ\phi_\tau^{-1}$ on $\T$.
The proof is completed by noting that the inner product in \eqref{inte} is dominated by $Cn^{1/4}(|I_1|+|I_2|)$.
\end{proof}

\begin{rem} Our assumption of uniform stopping time (or radial symmetry of the potential) is used only for the estimation of the integral $I_1$ above.
\end{rem}

\section{The hard edge scaling limit} \label{oopp}
We will now prove Theorem \ref{mthm2}. We shall start by constructing a presumptive local approximation of the 1-point function.

\subsection{The approximate 1-point function} \label{central}
By the definition of $F_{j,n}$ in \eqref{quasi}, and the basic identity \eqref{good}, we have on the set $\phi_\tau^{-1}(D_{j,n})$,
\begin{equation*}\label{gumb}|w_{j,n}^\sharp|^2\sim\sqrt{\frac n \pi}\cdot |\phi_\tau'|\cdot e^{-n(Q-V_\tau)}e^{\re \calH_\tau}.\end{equation*}

We shall estimate, for $\zeta$ close to the point $0\in\Gamma_1$, the sum
$$\bfR_n^\sharp(\zeta):=\sum_{j=n-\sqrt{n}\log n}^{n-1}|w_{j,n}^\sharp(\zeta)|^2,$$
which we shall call "approximate 1-point function".

More precisely, rescale about $0\in\d S$ by
\begin{equation}\label{rescale}\zeta=\frac z{\sqrt{n\Lap Q(0)}},\qquad R_n^\sharp(z)=\frac 1 {n\Lap Q(0)}\bfR_n^\sharp(\zeta).\end{equation}

 We suppose in the following that $z$ stays in a compact subset of $\L$; we will at first assume that $z$ is real and negative. Observe in this case that
 \begin{equation}\label{gumb2}|w_{j,n}^\sharp(\zeta)|^2\sim\sqrt{\frac n \pi}\frac {|\phi_1'(0)|} {\fif\circ(\xi_{j,n}\cdot \gz)}e^{-n(Q-V_\tau)(\zeta)}\sqrt{\Lap Q(0)},\end{equation}
where we write
$$\gz:=\gone(0)= \frac{\phi_1'(0)}{\sqrt{\Lap Q(0)}}.$$

We apply Taylor's formula about the closest point $x_{j,n}\in\Gamma_\tau\cap \R$ to $0$, computed in Lemma \ref{movin},
$$x_{j,n}=-\frac {|\phi_1'(0)|}{2\Lap Q(0)}(1-j/n)+\cdots.$$
Note first that
$$\sqrt{n\Lap Q(0)}x_{j,n}=-\frac {\gz } 2 \sqrt{n}(1-j/n)\sim\frac \gz 2 \cdot \xi_{j,n}$$
where (as before) $\xi_{j,n}=(j-n)/\sqrt{n}$.

Taylor's formula gives
\begin{align*}n(Q-V_\tau)(\zeta)&=n(Q-V_\tau)\left(x_{j,n}+\frac {z+
\frac {\gz } 2\sqrt{n}(1-j/n)}
{\sqrt{n\Lap Q(0)}}\right)\\
&\sim 2\Lap Q(x_{j,n})\left(\frac {z+
\frac {\gz } 2\sqrt{n}(1-j/n)}
{\sqrt{\Lap Q(0)}}\right)^2\\
&\sim \frac 1 2 (2z+\gz \sqrt{n}(1-j/n))^2.
\end{align*}
Hence, by \eqref{gumb2},
$$\frac 1 {n\Lap Q(0)}|w_{j,n}^\sharp(\zeta)|^2\sim
\frac 1 {\sqrt{2\pi}}\frac {\gz }{\sqrt{n}} \frac 1 {\fif(\gz (j-n)/\sqrt{n})} e^{-\frac 1 2(2z+\gz \sqrt{n}(1-j/n))^2}.$$

Setting $k=n-j$ and summing in $k$, it follows that
\begin{equation}\label{sles}\begin{split}R_n^\sharp(z)
&\sim\frac 1 {\sqrt{2\pi}}\sum_{k=1}^{\sqrt{n}\log n}
\frac {\gz }{\sqrt{n}}\frac {\gamma(2z+k\gz /\sqrt{n})}{\fif(-\gz k/\sqrt{n})}\\
&\sim\int_0^\infty \frac {\gamma(2z+t)}{\fif(-t)}\, dt=H(2z).\\
\end{split}
\end{equation}
Here we replaced a Riemann sum with an integral and used the definition of the $H$-function \eqref{hfun}.

If $z\in\L$ is not real, we merely select a new ($n$-dependent) coordinate system so that the origin corresponds to the point $\zeta=z/\sqrt{n\Lap Q(0)}$, and so that
the outwards unit normal to $\Gamma$ points in the positive real direction at $\zeta$.
Repeating the above argument, we see that $R_n^\sharp(z)\to H(z+\bar{z})$.
It is easily seen that the convergence is bounded on $\C$ and
locally uniform on $\C\setminus(i\R)$.

\subsection{Quantization of the quasipolynomials and error estimates} To complete our analysis, we must show that $R_n^\sharp$ well approximates the rescaled 1-point function $R_n$ on bounded subsets of the $z$-plane, provided that we assume radial symmetry.
To do this, we shall, following \cite[Section 4.9]{HW}, correct our quasipolynomials to actual polynomials, by using $L^2$-estimates for the norm minimal solution
to suitable $\dbar$-problems.

Fix $j$ such that $\tau=j/n$ is in the range $\delta_n<\tau<1$.

We correct $F_{j,n}$ to a polynomial $p_{j,n}^*$ of degree at most $j$ in the following way. Let $u_0$ be the
$L^2(e^{-nQ})$-minimal solution to the $\dbar$-problem
$$\dbar u=\dbar\chi_0\cdot F_{j,n},\qquad \text{and}\qquad u(\zeta)=O(\zeta^j),\qquad (j\to\infty).$$
We set $p_{j,n}^*=\chi_0\cdot F_{j,n}-u_0$ and observe that $p_{j,n}^*$ is an entire function which is $O(\zeta^j)$ as $\zeta\to\infty$, i.e., it is a polynomial of degree at most $j$.

Let us now briefly recall how to derive the basic estimate
\begin{equation}\label{hoe}\int|u_0|^2e^{-nQ}\le \frac C n \int |\dbar u_0|^2e^{-nQ}.\end{equation}
To do this, we fix a constant $\alpha$ with $1<\alpha<2$ and use the modification of $\check{Q}_\tau$ defined by
$$\phi_n(\zeta)=\check{Q}_\tau(\zeta)+\frac \alpha n\log(1+|\zeta|^2),$$
which is strictly subharmonic and $C^{1,1}$-smooth in $\C$. Recall that $\check{Q}_\tau=Q$ on $S_\tau$, and so on the support of $\dbar \chi_0$. It thus follows from Hörmander's well-known estimate in \cite[Section 4.2]{H} that
\begin{equation}\label{hoe2}\int|u_0|^2e^{-n\phi_n}\le \int_{S_\tau}|\dbar u_0|^2\frac {e^{-n\phi_n}}{n\Lap \phi_n}.\end{equation}
Since
$-nQ\le -n\phi_n+\const$ on $\C$,
we conclude the estimate in \eqref{hoe}.

Combining \eqref{hoe} with the estimate \eqref{gest}, we conclude that
\begin{equation}\label{ULK}\int_\C|p_{j,n}^*-\chi_0\cdot F_{j,n}|^2e^{-nQ}\, dA\le\frac C n\int_\C |\dbar \chi_0 \cdot F_{j,n}|^2e^{-nQ}\, dA\le Ce^{-cn}\end{equation}
where $c$ is some positive constant.

Let $\pi_{j,n}$ be the orthogonal projection of the space $L^2(S)=L^2(S,dA)$ onto the subspace
$$\calW_{j,n}=\{pe^{-nQ^S/2};\, \degree (p)\le j\}\subset L^2(S),$$ consisting of weighted analytic polynomials. We will write
$w_{j,n}^\dagger=\pi_{j,n}[w_{j,n}^\sharp]$ where (as before) $w_{j,n}^\sharp=\chi_0\cdot F_{j,n}\cdot e^{-nQ^S/2}$. The estimate \eqref{ULK} implies that
\begin{equation}\label{show0}\begin{split}&\|w_{j,n}^\dagger-w_{j,n}^\sharp\|_2
\le Ce^{-c n}.\\
\end{split}\end{equation}


Combining \eqref{show0} with Corollary \ref{bom}, we see that
\begin{equation}\label{shave}\|w_{j,n}^\dagger\|_2=1+O(1/\sqrt{n}).\end{equation}

Also, by Lemma \ref{appop}, we have
for each $w\in\calW_{j-1,n}$ that
\begin{equation}\label{star}|\int_{S}w_{j,n}^\dagger\bar{w}\, |\le Cn^{-1/2}\cdot \|w\|_2.\end{equation}

Now let $w_{j,n}^\ddagger=w_{j,n}^\dagger-\pi_{j-1,n}[w_{j,n}^\dagger]$.

In view of \eqref{star} we have $\|\pi_{j-1,n}[w_{j,n}^\dagger]\|_{2}\le Cn^{-1/2}$, and hence
\begin{equation}\label{shave2}\|w_{j,n}^\ddagger-w_{j,n}^\dagger\|_2\le Cn^{-1/2},\qquad \|w_{j,n}^\ddagger\|_2=1+O(n^{-1/2}).\end{equation}

 Now write $w_{j,n}^\ddagger=c_{j,n}w_{j,n}$ where $w_{j,n}=p_{j,n}e^{-nQ^S/2}$ is the weighted orthonormal polynomial of degree $j$, and where
$c_{j,n}$ is a constant. We can assume that $c_{j,n}>0$.

The estimate \eqref{shave2} shows that $c_{j,n}=\|w_{j,n}^\ddagger\|_{2}=1+O(n^{-1/2})$, so
\begin{equation}\label{close}\|w_{j,n}^\ddagger-w_{j,n}\|_{2}=|c_{j,n}-1|\le C(n^{-1/2}).\end{equation}

All in all, the estimates \eqref{show0}--\eqref{close} imply
\begin{equation}\label{fest}\begin{split}\|w_{j,n}-w_{j,n}^\sharp\|_2&\le \|w_{j,n}-w_{j,n}^\ddagger\|_2+\|w_{j,n}^\ddagger-w_{j,n}^\dagger\|_2+
\|w_{j,n}^\dagger-w_{j,n}^\sharp\|_2\le
C(n^{-1/2}).
\end{split}\end{equation}

It is easy to see that the proof of Lemma \ref{l6} goes through also for our present weighted quasipolynomials, provided (say) that we restrict to the complement $\C\setminus X$.
We thus conclude from \eqref{fest} that
\begin{equation}\label{fest2}\|w_{j,n}-w_{j,n}^\sharp\|_{L^\infty(S\setminus X)}\le C.\end{equation}

Rescaling about the boundary point $0$ via \eqref{rescale}, we now conclude that, if $z$ stays in a fixed compact subset of $\C$ while $\zeta$
is given by \eqref{rescale},
\begin{align*}R_n(z)&\sim \frac 1 {n\Lap Q(0)}\sum_{j=n-\sqrt{n}\log n}^{n-1}|w_{j,n}(\zeta)|^2\\
&=\frac 1 {n\Lap Q(0)}\sum_{j=n-\sqrt{n}\log n}^{n-1}\left[|w_{j,n}^\sharp(\zeta)|^2+O(n^{1/4})\right]\\
&= H(z+\bar{z})\cdot\1_\L(z)+O(n^{-1/4}\log n).
\end{align*}
Here we used \eqref{sles}, \eqref{fest2}, and the bound $|w_{j,n}^\sharp|\le Cn^{1/4}$, which is obvious from the form of $w_{j,n}^\sharp$.

Thus
$R_n(z)\to H(z+\bar{z})\cdot \1_\L(z)$ where the convergence
is locally uniform on $\C\setminus (i\R)$ and locally bounded on $\C$ (Corollary \ref{goodb}), and we may
conclude our proof of Theorem \ref{mthm2}. $\qed$

\end{document}